\newcommand\blfootnote[1]{%
  \begingroup
  \renewcommand\thefootnote{}\footnote{#1}%
  \addtocounter{footnote}{-1}%
  \endgroup
}
\begin{document}
%
\title{Resource Allocation in Heterogenous Full-duplex OFDMA Networks: Design and Analysis
}

\author{\IEEEauthorblockN{Peyman Tehrani\IEEEauthorrefmark{1},
Farshad Lahouti\IEEEauthorrefmark{2},
Michele Zorzi\IEEEauthorrefmark{3}}

\IEEEauthorblockA{\IEEEauthorrefmark{1} Computer Science Department,
University of California, Irvine, USA} \\
\IEEEauthorblockA{\IEEEauthorrefmark{2}Electrical Engineering Department, California Institute of Technology, USA} \\
\IEEEauthorblockA{\IEEEauthorrefmark{3}Department of Information Engineering, University of Padova, Italy \\
Emails: peymant@uci.edu, lahouti@caltech.edu, zorzi@dei.unipd.it}
}


\maketitle

\vspace{-6mm}
\begin{abstract}
Recent studies indicate the feasibility of full-duplex (FD) bidirectional wireless communications. Due to its potential to increase the capacity, analyzing the performance of a cellular network that contains full-duplex devices is crucial. In this paper, we consider maximizing the weighted sum-rate of downlink and uplink of an FD heterogeneous OFDMA network where each cell consists of an imperfect FD base-station (BS) and a mixture of half-duplex and imperfect full-duplex mobile users. To this end, first, the joint problem of sub-channel assignment and power allocation for a single cell network is investigated. Then, the proposed algorithms are extended for solving the optimization problem for an FD heterogeneous network in which intra-cell and inter-cell  interferences are taken into account. Simulation results demonstrate that in a single cell network, when all the users and the BSs are perfect FD nodes, the network throughput could be doubled. Otherwise, the performance improvement is limited by the inter-cell interference, inter-node interference, and self-interference. We also investigate  the effect of the percentage of FD users on the network performance in both indoor and outdoor scenarios, and analyze the effect of the self-interference cancellation capability of the FD nodes on the network performance.

\textbf{\textit{Index Terms}}: Full-duplex, self-interference, resource allocation, OFDMA, femto cell, heterogeneous.
\end{abstract}

\IEEEpeerreviewmaketitle
\pagebreak

\vspace{-8mm}
\section{Introduction}

\vspace{-8mm}
\blfootnote{This paper has been presented in part at the IEEE International Conference on Communication (ICC), Kuala Lumpur, Malaysia, May 2016.}

In wireless communications, separation of transmission and reception in time or frequency has been the standard practice so far. However, through simultaneous transmission and reception in the same frequency band, wireless full-duplex has the potential to double the spectral efficiency. Due to this substantial gain, full-duplex technology has recently attracted noticeable interest in both academic and industrial worlds. The main challenge in full-duplex (FD) bidirectional communication is self-interference (SI) cancellation. In recent years, many attempts have been made to cancel the self-interference signal \cite{korpi2014widely, kaufman2013analog, ahmed2014all, duarte2012experiment}. In \citep{bharadia2013full}, it is shown that $110$ dB SI cancellation is achievable, and by jointly exploiting analog and digital techniques, SI may be reduced to the noise floor.

A full-duplex physical layer in cellular communications calls for a re-design of higher layers of the protocol stack, including scheduling and resource allocation algorithms. In \cite{goyal2013analyzing}, the performance of an FD-based cellular system is investigated and an analytic model to derive the average uplink and downlink channel rate is provided. A resource allocation problem for an FD heterogeneous orthogonal frequency-division multiple access (OFDMA) network is considered in \cite{sultanmode}, in which the macro base station (BS) and small cell access points operate in either  FD or half-duplex (HD) MIMO mode, and all mobile nodes operate in HD single antenna mode. In \cite{di2014radio}, using matching theory, a sub-channel allocation algorithm for an FD OFDMA network is proposed. In both \cite{sultanmode} and \cite{di2014radio} only a single sub-channel is assigned to each of the uplink users in which they transmit with constant power. Resource allocation solutions are proposed in \cite{nam2015joint} and \cite{namradio} for FD OFDMA networks with perfect FD nodes (SI is canceled perfectly).

Recent research reports investigate resource allocation in  multi-cell FD networks. In \cite{malik2017suboptimal}, a sub-optimal resource management algorithm is presented for the sum rate maximization of a small multi-cell system, including FD base stations and HD mobile users. In \cite{randrianantenaina2017interference}, the problem of maximizing a network-wide
rate-based utility function subject to uplink (UL) and downlink (DL) power constraints is studied in a flexible duplex system, in which UL/DL channels are allowed to have partial overlap via fine-tuned bandwidth allocation. For simplicity, it is assumed that the number of sub-channels and the users are exactly the same. In \cite{sekander2016decoupled}, the problem of decoupled UL-DL
user association, which allows users to associate with different BSs for UL and DL transmissions, is investigated in a multi-tier FD  Network. In \cite{aquilina2017weighted}, weighted sum rate maximization in a FD multi-user multi-cell MIMO network  is studied.  A user scheduling and power allocation method for ultra-dense FD small-cell networks is presented in \cite{atzeni2016performance}. In \cite{sekander2016decoupled} \cite{aquilina2017weighted} and \cite{atzeni2016performance}, the sub-channel allocation problem is not investigated since a single channel network is assumed. The most related work to the current research is \cite{yun2016intra}, in which, a radio resource management solution for an OFDMA FD heterogeneous cellular network is presented. The algorithm jointly assigns the transmission mode, and the user(s) and their transmit power levels for each frequency resource block to optimize the sum of the downlink and uplink rates. The users are assumed to use a single class of service. A sub-optimal resource allocation algorithm is then proposed which takes into account both intra-cell and inter-cell interferences. The sub-optimal power adjustment algorithm is designed under the assumption of high SINR, where the rate of an FD-FD or FD-HD link is independent of power variations. 

In this paper, we consider a general resource allocation problem in a heterogeneous OFDMA-based network consisting of imperfect FD macro BS and femto BSs and both HD and imperfect FD users. We aim to maximize the downlink and uplink weighted sum-rate of femto users  while protecting the macro users rates. The weights allow for users to utilize differentiated classes of service, accommodate both frequency or time division duplex for HD users, and prioritize uplink or downlink transmissions. To be more realistic, imperfect SI cancellation in FD devices is assumed and FD nodes suffer from their SI. A contribution of the current work is to consider the presence of a mixture of FD and HD users, which enables us to quantify the percentage of FD users needed to capture the full potential of FD technology in wireless OFDMA networks. We also analyze the effect of the SI cancellation level on the network performance, which to our knowledge has not been studied in prior works. We will show that when the SI cancellation capability is worse than a specified threshold, then the throughput of an all FD user network would not be larger than the throughput of an all HD user network. Moreover, we will analyze this threshold theoretically and compare its outcome with simulation results. 

The remainder of this paper is organized as follows. In Section \ref{system_model}, the basic system model of a single cell FD network is given and the optimization problem is formulated. In Section \ref{subchannel allocation}, a sub-channel allocation algorithm for selecting the best pair in each sub-channel is presented. Power allocation is considered in Section \ref{power_allocation}. A theoretical approach for deriving the SI cancellation coefficient threshold is proposed in Section \ref{Coefficient_Threshold}. In Section \ref{Two-tier}, the optimization problem for an FD heterogeneous network is presented. Numerical results for the proposed methods are shown in Section \ref{simulation_results}. Finally, the paper is concluded in Section \ref{conclusion}. 

\vspace{-3mm}
\section{System Model And Problem Statement} \label{system_model}
We consider a single cell network that consists of a full-duplex base-station (BS) and a total of $K$ half-duplex and full-duplex users. For communications between the nodes and the BS, we assume that an OFDMA system with $N$ sub-channels is used. All sub-carriers are assumed to be perfectly synchronized, and so there is no interference between different sub-channels. Since the base-station operates in full-duplex mode, it can transmit and receive simultaneously in each sub-channel. In each timeslot the base-station is to properly allocate the sub-channels to the downlink or uplink of appropriate users and also determine the associated transmission power in an optimized manner. We assume that the base-station and the FD users are imperfect full-duplex nodes that suffer from self-interference. We define a self-interference cancellation coefficient to take this into account in our model and denote it by $0\leq\beta\leq1$, where $\beta=0$  indicates that SI is canceled perfectly and $\beta=1$  means no SI cancellation. For simplicity, we assume the same self-interference cancellation coefficient for BS and FD users, but consideration of different coefficients would be possible. In this paper, the goal is to maximize the weighted sum-rate of downlink and uplink users with a total power constraint at the base-station and a transmission power constraint for each user.

\begin{figure}[t]
\centering
  \includegraphics[width=9cm, height=6cm]{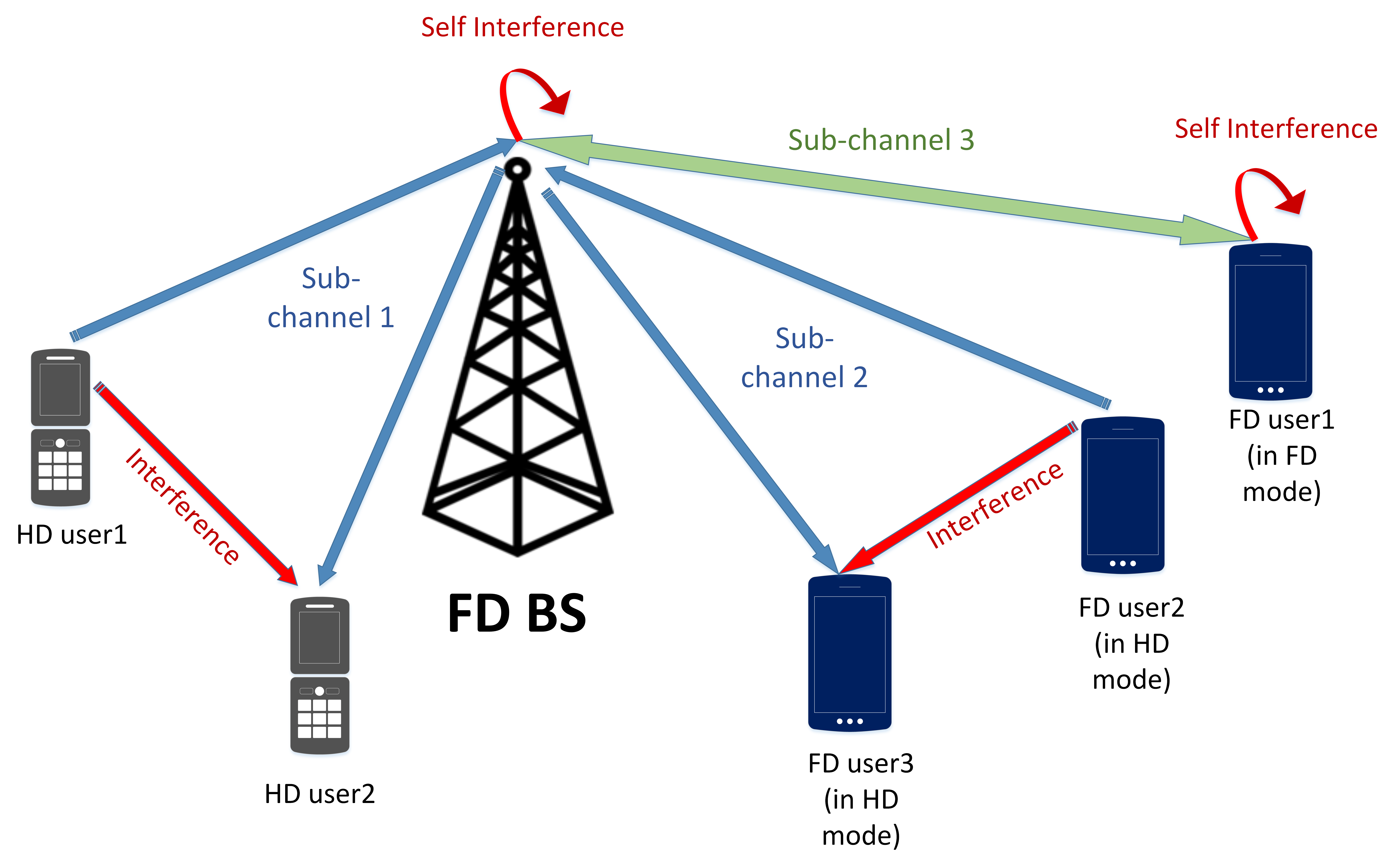}
  \vspace{-6mm}
  \caption{\small{A single cell OFDMA full-duplex network that contains an imperfect full-duplex base-station and multiple half-duplex and full-duplex mobile nodes. Due to the full-duplex nature of this network, the base-station suffers from its self interference, and the uplink nodes cause interference to their co-channel downlink nodes.}}
\vspace{-9mm}
\end{figure}

 We define the downlink weighted sum-rate as
\vspace{-1mm}
\begin{equation}\label{DLRateQ}
R_d=\sum_{k=1}^{K}\sum_{n\in S_{k,d}}w_k\log \bigg(1+ \frac{g_k(n)p_{k,d}(n)}{N_k+I_{k,j}(n)p_{j,u}(n)} \bigg)
\vspace{-0mm}
\end{equation}

And the uplink weighted sum-rate as 
\vspace{-1mm}
\begin{equation}
R_u=\sum_{j=1}^{K}\sum_{n\in S_{j,u}}v_j\log \bigg(1+ \frac{g_j(n)p_{j,u}(n)}{N_0+\beta p_{k,d}(n)}\bigg)
\end{equation}  
The variables used in the above equations are introduced in Table I. We assume here that the channel is reciprocal, i.e., uplink and downlink channel gains are the same. We further assume that the receiver noise powers in different sub-channels are the same. The term $I_{k,j}(n)p_{j,u}(n)$ in \eqref{DLRateQ} denotes the interference: When user $k$ is a FD device and both downlink and uplink of sub-channel $n$ are allocated to it $(j=k)$, $I_{k,j}(n)=\beta$, else $I_{k,j}(n)=g_{k,j}(n)$ is the channel gain between uplink user $j$ and downlink user $k$. We assume that the base-station knows all the channel gains, the noise powers, and the SI cancellation coefficient and weights assigned to the downlink and uplink of all users. 

Let $P_0$ and $P_k$ denote the maximum available transmit power for the base-station and for user $k$, respectively. Then the proposed design optimization problem, denoted by \textbf{\textit{P}1}, can be formulated as follows

\begin{table}[t!]
 \caption{Main Parameters and Variables}
 \centering
 \begin{tabular}{|c | m{24em}|}

 \hline
 $w_k$ & weight assigned to the downlink  of user $k$ \\
 \hline
 $v_k$ & weight assigned to the uplink of user $k$ \\
 \hline
  $p_{k,d}(n)$ &  transmission power from BS to user $k$ on sub-channel $n$ \\
 \hline
  $p_{j,u}(n)$ & transmission power from user $j$ to BS on sub-channel $n$ \\
 \hline
  $ N_k$  & Gaussian noise variance at the receiver of user $k$ \\
 \hline
  $N_0$ & Gaussian noise variance at the base-station receiver\\
 \hline
  $S_{k,d}$ & set of sub-channels allocated to user $k$ for downlink \\
 \hline
  $S_{j,u}$ & set of sub-channels allocated to user $j$ for uplink  \\
 \hline
  $\beta$ & self-interference cancellation coefficient \\
 \hline
 $g_{k}(n)$ & channel gain between BS and user $k$ on sub-channel $n$ \\
 \hline
 $g_{k,j}(n)$ & channel gain between users $j$ and $k$ on sub-channel $n$ \\
 \hline
 $I_{k,j}(n)$ & equal to $\beta$ when $j=k$, and to $g_{k,j}(n)$ otherwise\\
 \hline
 $P_0$ & maximum available transmit power at BS\\
 \hline
 $P_k$ & maximum available transmit power at user $k$ \\
 \hline
 \end{tabular}
\end{table}
\vspace{-9mm}

\begin{equation}
\textbf{\textit{P}1}: \operatorname*{maximize}_{p_{k,d},p_{j,u},S_{j,u},S_{k,d},  \forall k,j} \qquad R_d+R_u 
\end{equation}
\begin{equation}
\text{subject to }   \sum_{k=1}^{K}\sum_{n\in S_{k,d}}p_{k,d}(n) \leq P_0
\end{equation}
\begin{equation}
\sum_{n\in S_{j,u}}p_{j,u}(n) \leq P_j \quad \forall j
\end{equation}
\begin{equation}
p_{j,u}(n),p_{k,d}(n)\geq 0 \quad \forall j,k,n
\end{equation}
\begin{equation}
 S_{i,d}\cap S_{j,d}=\phi, S_{i,u}\cap S_{j,u}=\phi \quad \forall i\neq j
 \end{equation}
 \begin{equation}
\cup_{j=1}^{K}  \ S_{j,u} \subseteq \{1,2,...,N\},\cup_{k=1}^{K} \ S_{k,d} \subseteq \{1,2,...,N\} 
\end{equation}
\begin{equation}
S_{k,u} \cap S_{k,d}=\phi \quad \text{if user} \ k \  \text{is} \ \text{HD}
\end{equation}

where (4) and (5) indicate the power constraint on the BS and the users, respectively. Constraint (6) shows the non-negativity feature of powers; (7) come from the fact that a sub-channel cannot be allocated to two distinct users simultaneously; (8) indicate that we have no more than $N$ sub-channels, and the last constraint accounts for the half-duplex nature of the HD users.

The general resource allocation problem presented is combinatorial in nature because of the channel allocation issue and addressing it together with power allocation in an optimal manner is challenging, especially as the number of users and sub-channels grow. Moreover, the non-convexity of the rate function makes the power allocation problem itself challenging even for a fixed sub-channel assignment. Here, we invoke a two step approximate solution. First, we  determine the allocation of downlink and uplink sub-channels to users and then determine the transmit power of the users and the base-station on their allocated sub-channels. In other words, we first specify the sets $S_{k,d}$ and $S_{j,u}$ and then determine the variables $p_{j,u}(n)$, $p_{k,d}(n)$. In the next Section, we introduce our sub-channel allocation algorithm. 

\section{Sub-channel Allocation} \label{subchannel allocation}
 
The sub-channel allocation problem, denoted by \textbf{\textit{P}2}, can be formulated as follows 
\begin{align*}
\textbf{\textit{P}2}: & \operatorname*{maximize}_{S_{j,u},S_{k,d}} \qquad &R_d+R_u\\
& \text{subject to} \qquad &\text{(7)-(11)}
\end{align*}
To solve the problem \textbf{\textit{P}2}, we should first solve the following power allocation problem, denoted by \textbf{\textit{P}3}, to maximize the weighted sum-rate in a single sub-channel and for a fixed pair of uplink and downlink users. Since a single sub-channel is being considered in \textbf{\textit{P}3}, we have dropped the variable $n$ in the notation.
\begin{align*}
\textbf{\textit{P}3}: \operatorname*{max}_{p_{k,d},p_{j,u}} L(p_{k,d},p_{j,u})=&w_k\log(1+\frac{g_kp_{k,d}}{N_k+I_{k,j}p_{j,u}})+v_j\log(1+ \frac{g_jp_{j,u}}{N_0+\beta p_{k,d}})
\end{align*}
\begin{align}
&0\leq p_{k,d}\leq P_{max1} \\
&0\leq p_{j,u}\leq P_{max2}
\end{align}
Here, $P_{max1}$ and $P_{max2}$ are the maximum allowable transmit powers.
\newtheorem{prop}{Proposition}
\begin{prop}
For a fixed downlink user $k$ and uplink user $j$, the optimal pair of powers $(p_{k,d}^*,p_{j,u}^*)$ that optimizes \textbf{\textit{P}3} belongs to the following set.
{\small
\begin{align*}
\textbf{S}=\lbrace(0,P_{max2}),(P_{max1},0),(P_{max1},P_{max2}),&(p_{k,d}^a,P_{max2}),(P_{max1},p_{j,u}^a)\rbrace
\end{align*}
\text{ where} 
\vspace{-3mm}
\begin{align}
p_{k,d}^a=\frac{-B-\sqrt{B^2-4AC}}{2A} , p_{j,u}^a=\frac{-E-\sqrt{E^2-4DF}}{2D}
\end{align}
\text{and} 
\vspace{-3mm} 
\begin{align}
A&=w_k g_k \beta^2, B=2w_kN_0g_k\beta+(w_k-v_j)\beta g_kg_jp_{j,u} \\
C&=w_kg_kN_0^2+w_kg_kg_jN_0p_{j,u}-v_jN_kg_jp_{j,u}\beta -v_jg_j\beta I_{k,j}p_{j,u}^2 \\
D&=v_j g_j I_{k,j}^2,E=2v_jN_kg_jI_{k,j}+(v_j-w_k)I_{k,j} g_kg_jp_{k,d} \\
F&=v_jg_jN_k^2+v_jg_kg_jN_kp_{k,d}-w_kN_0g_kp_{k,d}I_{k,j}-w_kg_k\beta I_{k,j}p_{k,d}^2 
\end{align}
}
\end{prop}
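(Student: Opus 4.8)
The plan is to decouple the two–variable optimization in \textbf{\textit{P}3} into one–variable analyses, collect the resulting candidate points, and then eliminate all but the five points of \textbf{S}; ruling out the ``doubly interior'' critical point is where the real content sits.

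\emph{Step 1 (one variable at a time).} Fix $p_{j,u}$ and differentiate $L$ with respect to $p_{k,d}$. Writing the derivative over its common (strictly positive) denominator, its numerator is the quadratic $A\,p_{k,d}^2+B\,p_{k,d}+C$ with $A,B,C$ exactly as displayed in the statement; since $A=w_k g_k\beta^2>0$ this parabola opens upward, so on $[0,\infty)$ the function $L(\cdot,p_{j,u})$ is increasing, then decreasing, then increasing. Hence its maximizer over $[0,P_{max1}]$ is $0$, or $P_{max1}$, or the smaller root of the quadratic, which is precisely the $p_{k,d}^a$ of the statement and is the only stationary point that is a local maximum. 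The same computation with $p_{k,d}$ and $p_{j,u}$ interchanged shows that for fixed $p_{k,d}$ the optimal $p_{j,u}$ is $0$, $P_{max2}$, or the smaller root $p_{j,u}^a$ of $D\,p_{j,u}^2+E\,p_{j,u}+F$.

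\emph{Step 2 (combine and prune).} At a global optimum $(p_{k,d}^*,p_{j,u}^*)$, each coordinate is an optimal response to the other, so Step 1 leaves nine cases. Three collapse by monotonicity: if $p_{k,d}^*=0$ then $L$ reduces to $v_j\log(1+g_j p_{j,u}/N_0)$, which is increasing in $p_{j,u}$, so $p_{j,u}^*=P_{max2}$; this removes $(0,0)$ and $(0,p_{j,u}^a)$, and symmetrically $p_{j,u}^*=0$ forces $p_{k,d}^*=P_{max1}$, removing $(p_{k,d}^a,0)$. (In the surviving point $(p_{k,d}^a,P_{max2})$ the root $p_{k,d}^a$ is evaluated at $p_{j,u}=P_{max2}$, and symmetrically for $p_{j,u}^a$.) What is left is exactly \textbf{S} together with the single extra point $(p_{k,d}^a,p_{j,u}^a)$.

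\emph{Step 3 (the doubly interior point — the main obstacle).} Assume, for contradiction, that the optimum lies strictly inside the box in both coordinates; then $\nabla L=0$ there, i.e.\ $\frac{w_k g_k}{P}=\frac{v_j\beta g_j\,p_{j,u}^*}{RS}$ and $\frac{v_j g_j}{R}=\frac{w_k I_{k,j} g_k\,p_{k,d}^*}{PQ}$, where $P=N_k+I_{k,j}p_{j,u}^*+g_k p_{k,d}^*$, $Q=N_k+I_{k,j}p_{j,u}^*$, $R=N_0+\beta p_{k,d}^*+g_j p_{j,u}^*$, $S=N_0+\beta p_{k,d}^*$. Multiplying these two identities and cancelling the positive common factor $w_k v_j g_k g_j/(PR)$ yields $QS=\beta I_{k,j}\,p_{k,d}^*\,p_{j,u}^*$, i.e.\ $(N_k+I_{k,j}p_{j,u}^*)(N_0+\beta p_{k,d}^*)=\beta I_{k,j}\,p_{k,d}^*\,p_{j,u}^*$, which simplifies to $N_k N_0+\beta N_k\,p_{k,d}^*+I_{k,j}N_0\,p_{j,u}^*=0$ — impossible, as the left-hand side is at least $N_k N_0>0$. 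Therefore $L$ has no critical point in $\{p_{k,d}>0,\ p_{j,u}>0\}$, the point $(p_{k,d}^a,p_{j,u}^a)$ can never be optimal, and an optimal pair must lie in \textbf{S}. The degenerate situations in which one of $w_k,v_j,g_k,g_j$ is zero are immediate: $L$ is then monotone in one variable, and its maximum is a box corner already in \textbf{S}.

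I expect Step 3 to be the crux. The convenient feature is that the product of the two stationarity conditions collapses to a linear equation with an all-positive left side, which is manifestly unsatisfiable; this avoids any Hessian or second–order test and directly yields the five-point list.
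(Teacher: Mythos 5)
Your proof is correct, and its skeleton matches the paper's: in Step 1 you obtain the same quadratics $Ap_{k,d}^2+Bp_{k,d}+C$ and $Dp_{j,u}^2+Ep_{j,u}+F$ as the numerators of the partial derivatives, identify the smaller root as the only interior one--dimensional local maximum, and in Step 2 you discard $(0,p_{j,u}^a)$ and $(p_{k,d}^a,0)$ by exactly the domination argument the paper uses (if one power is zero, $L$ is monotone increasing in the other). Where you genuinely diverge is the crux, Step 3. The paper never confronts the doubly interior point $(p_{k,d}^a,p_{j,u}^a)$ head-on; instead it exploits the fact that the sign of $B$ is controlled by $w_k-v_j$ and the sign of $E$ by $v_j-w_k$, so at least one of $B,E$ is nonnegative, hence at least one of the two smaller roots is non-positive; this yields two four-element candidate sets $P_{opt1}$ (when $B\geq 0$) and $P_{opt2}$ (when $B<0$) whose union is $\textbf{S}$, and the doubly interior point never appears in either. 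You instead multiply the two first-order conditions, cancel the common positive factor, and collapse the system to $N_kN_0+\beta N_k p_{k,d}^*+I_{k,j}N_0 p_{j,u}^*=0$, which is unsatisfiable since $N_kN_0>0$. This is a genuinely different mechanism for the key exclusion: it proves the stronger fact that $L$ has no stationary point in the open positive quadrant for \emph{any} choice of weights, avoids the case split on the sign of $w_k-v_j$ entirely, and (as you note) degrades gracefully in the degenerate parameter cases. What the paper's weight-sign argument buys in exchange is a computationally useful refinement that your route does not give: depending on whether $w_k\geq v_j$ or not, only one of $(p_{k,d}^a,P_{max2})$ and $(P_{max1},p_{j,u}^a)$ can ever be relevant, so only four candidates per user pair need to be evaluated inside the $O(K^2)$ search rather than five.
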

\begin{proof}
Computing the derivative with respect to $p_{k,d}$ and setting it to zero we have:
\begin{equation*}
 \frac{\partial L}{\partial p_{k,d}}=0\Longrightarrow Ap_{k,d}^2+Bp_{k,d}+C=0
\end{equation*}
where $A$, $B$ and $C$ are defined above.
It is evident that $A\geq 0$, and if $w_k\geq v_j$ then $B\geq 0$. When $A,B\geq 0$ the  above quadratic equation  either  has  no zeros in $[0,P_{max1}]$ or has only one zero where the function changes sign from $-$ to $+$ indicating a local minimum for $L$. Therefore, in both cases the maximum is attained at a boundary point $0$ or $P_{max1}$. But when $w_k \leq v_j$, $B$ could be negative, and the smaller root of the quadratic equation $p_{k,d}^a$ could be positive. In this case, the maximum is attained  at $P_{max1}$ or $p_{k,d}^a$. By similar analysis for $p_{j,u}$ one sees that if $v_j\geq w_k$ then the maximum is attained at a boundary point $0$ or $P_{max2}$ and when $w_k\geq v_j$ the maximum is attained at $P_{max2}$ or $p_{j,u}^a$.
As a result, when $B\geq0$ the optimal transmission powers belong to the 
following set,
\small{
\begin{equation*}
P_{opt1}=\left\{(0,P_{max2}),(P_{max1},0),(P_{max1},P_{max2}),(P_{max1},p_{j,u}^a)\right\}.
\end{equation*}
}
\normalsize
Otherwise, if $B<0$, they belong to the set below
{\small 
\begin{equation*}
P_{opt2}=\left\{(0,P_{max2}),(P_{max1},0),(P_{max1},P_{max2}),(p_{k,d}^a,P_{max2})\right\}.
\end{equation*}
}
The cases {\small$(0,p_{j,u}^a)$} and {\small$(p_{k,d}^a,0)$} cannot be the optimal solutions of  \textbf{\textit{P}3} , because they are dominated by {\small$(0,P_{max2})$} and {\small$(P_{max1},0)$} which give a larger $L$.
\normalsize
Therefore, optimal powers could be found by checking the members of the set \textbf{S} and picking the one that corresponds to the largest $L$.
\end{proof}
Based on the above Proposition one can find the best uplink-downlink pair in each sub-channel by choosing the one with the largest value of $L$. This involves only $O(K^2)$ operations.
Now we can  present our sub-channel allocation algorithm to solve Problem \textbf{\textit{P}2}, in which we employ a sub-optimum power allocation scheme.
First, for each sub-channel $n$, we find the best channel gain among all users and denote it by $\tilde{g}({n})=\operatorname*{arg\,max}_k g_k(n) $. Then, we sort the sub-channels based on the value of $\tilde{g}({n})$. In other words. we find a sub-channel permutation $\{a_1,...,a_N\}$ such that
$\tilde{g}({a_1}) \geq \tilde{g}({a_2}) \geq . . .\geq \tilde{g}({a_N})$. Then, starting from sub-channel $a_1$, we seek $k$ and $j$ that maximize $L$. At the first iteration, we set $P_{max1}=P_0$ , $P_{max2}=P_k$ and for iteration $l\geq 2$ set $P_{max1}= \frac{P_0}{d_0(l)}$ and $P_{max2}=\frac{P_k}{d_k(l)}$ where $d_0(l)$ and $d_k(l)$ indicate the number of sub-channels to be allocated to the BS's downlink transmission and to user $k$'s uplink transmission, respectively, in the $l$th iteration. The proposed sub-channel allocation algorithm is summarized below.

\begin{table}[h!]
 \centering
 \begin{tabular}{m{30em}}
 \hline
 \textbf{Algorithm 1: Sub-channel Allocation Algorithm} \\
 \hline\hline
1.\textbf{for} $n=1$ to $N$ \textbf{do} \\ 
2.\quad $\tilde{g}({n})=\operatorname*{max}_k g_k(n) $
  \\
3.\textbf{end for} \\
4.Find a sub-channel permutation $\{ a_1,...,a_N \}$, $a_i\in\{1,...,N\}$, $a_i\neq a_j$ such that\\
\hspace{0.5cm} $\tilde{g}({a_1}) \geq \tilde{g}({a_2}) \geq . . .\geq \tilde{g}({a_N})$ \\
5. set $d_k(l)=1$  for $0\leq k \leq K$  and $1 \leq l \leq N$ \\
6.\textbf{for} $l=1$ to $N$ \textbf{do} \\
7.\quad Set $P_{max1}=\dfrac{P_0}{d_0(l)}$  and  $P_{max2}=\dfrac{P_k}{d_k(l)}      \forall k$  \\
8. \quad \textbf{for} $k=1$ to $K$ \textbf{do} \\
9. \quad \quad 	\textbf{for} $j=1$ to $K$ (if $k$ is an HD user $j\neq k$) \\
10.\quad \quad \quad In sub-channel $a_l$ solve the problem  \textbf{\textit{P}3}  \\
11.\quad \quad \textbf{end for}
\\
12.\quad  \textbf{end for}
\\
13.\quad Using the obtained optimal powers, find the best pair $(k^*,j^*)$ in the \\
 \quad \quad sub-channel $a_l^*$ that  has the largest value of $L$
\\
14.\quad  $S_{j^*,u} \leftarrow [{S_{j^*,u},a_l}]$   ,   $S_{k^*,d} \leftarrow [{S_{k^*,d},a_l}]$ \\
15.\quad \textbf{if} $p_{k^*}\neq 0$ then $d_0(n)=d_0(n)+1$; \\
16.\quad \textbf{if} $p_{j^*}\neq 0$ then $d_{j^*}(n)=d_{j^*}(n)+1$; \\
17.\textbf{end for} \\
 \hline
 \end{tabular}
\end{table}
The complexity of finding the best user in each sub-channel is $O(K)$ and for $N$ sub-channels is $O(KN)$. Similarly, the complexity of finding the best pair in each sub-channel is $O(K^2)$ and doing so for $N$ sub-channels requires $O(NK^2)$ operations. Since the complexity of sorting $N$ values is $O(N \log N)$, then the overall computational complexity of the proposed sub-channel allocation algorithm is $O(N \log N+NK^2)$.
\section{Power Allocation}        \label{power_allocation}
The power allocation problem, denoted by \textbf{\textit{P}4}, can be formulated as follows 
\begin{align*}
\textbf{\textit{P}4}: & \operatorname*{maximize}_{p_{k,d},p_{j,u}} \qquad &R_d+R_u\\
& \text{subject to} \qquad &\text{(4)-(6)}
\end{align*}
Due to the interference terms, the power allocation problem is non-convex. Here, we use the ``difference of two concave 
functions/sets'' (DC) programming technique \cite{tuy2013convex} to convexify this problem. In this procedure, the non-concave 
objective function is expressed as the difference of two concave 
functions, and the discounted term is approximated by its first order 
Taylor series. Hence, the objective becomes concave and can 
be maximized by known convex optimization methods. 
This procedure runs iteratively, and after each iteration the 
optimal solution serves as an initial point for the next iteration 
until the improvement diminishes in iterations. In \cite{kha2012fast}, the DC approach is used to formulate optimized power allocation in a multiuser interference channel, and in \cite{mili2016energy}, the DC optimization method is used to optimize the energy efficiency of an OFDMA device to device network. Here, we rewrite the objective function of \textbf{\textit{P}4} in DC form as follows
\vspace{-2mm}
\begin{align*}
\operatorname*{max}_{\mathbf{p}} \quad 
f(\mathbf{p})-h(\mathbf{p}) 
\end{align*}

 \vspace{-9mm}

\begin{align*}
f(\mathbf{p})&=\sum_{k=1}^{K}\sum_{n\in S_{k,d}}w_k\log(N_k+I_{k,j}(n)p_{j,u}(n)+ g_k(n)p_{k,d}(n)) \\
&+\sum_{j=1}^{K}\sum_{n\in S_{j,u}}v_j\log(N_0+\beta p_{k,d}(n)+ g_j(n)p_{j,u}(n)) 
\end{align*}
\vspace{-7mm}
\begin{align*}
h(\mathbf{p})&=\sum_{k=1}^{K}\sum_{n\in S_{k,d}}w_k\log(N_k+I_{k,j}(n)p_{j,u}(n))\\
&+\sum_{j=1}^{K}\sum_{n\in S_{j,u}}v_j\log(N_0+\beta p_{k,d}(n)) 
\end{align*}

\normalsize
where  
\begin{align*}
\mathbf{p}=[p_{k_1,d}(1),...,p_{k_N,d}(N),p_{j_1,u}(1),,...,p_{j_N,u}(N)]^T 
\end{align*} is the downlink and uplink transmitted power vector, and $k_i$ and $j_i$ denote the uplink and downlink users that are selected for the $i$th sub-channel after the sub-channel allocation phase. Now, the objective $f(\mathbf{p})-h(\mathbf{p})$ is a DC function. To write the Taylor series of the discounted function $h(\mathbf{p}$), we need its gradient, that can be easily derived as follows.
\small{
\begin{align*}
&\nabla h(\mathbf{p})= \bigg[\frac{u_{j_1}\beta}{\ln(2)}\frac{1}{N_0+\beta p_{k_1,d}(1))},. . . ,\frac{u_{j_N}\beta}{\ln(2)}\frac{N}{N_0+\beta p_{k_N,d}(N))}, \\ &\frac{w_{k_1}I_{k_1,j_1}(1)}{\ln(2)}\frac{1}{N_{k_1}+I_{k_1,j_1}(1)p_{j_1,u}(1))},..., \\
&\frac{w_{k_N}I_{k_N,j_N}(N)}{\ln(2)}\frac{1}{N_{k_N}+I_{k_N,j_N}(N)p_{j_N,u}(N))} \bigg] ^T 
\end{align*}
}

\normalsize
To make the problem convex, $h(\mathbf{p})$ is approximated with its first order approximation $h(\mathbf{p}^{(t)})+\nabla h^T(\mathbf{p}^{(t)})(\mathbf{p}-\mathbf{p}^{(t)})$ at point $\mathbf{p}^{(t)}$. We start from a feasible $\mathbf{p}^{(0)}$ at the first iteration, and $\mathbf{p}^{(t+1)}$ at the $t$th iteration is generated as the optimal solution of the following convex program
\begin{align*}
\mathbf{p}^{(t+1)}=\operatorname*{arg max}_{\mathbf{p}} \quad &f(\mathbf{p})- h(\mathbf{p}^{(t)})-\nabla h^T(\mathbf{p}^{(t)})(\mathbf{p}-\mathbf{p}^{(t)}) \\ 
&\text{subject to } \ (4)-(6)
\end{align*}
Since $h(\mathbf{p})$ is a concave function, its gradient is also its super gradient so we have
\begin{align*}
h(\mathbf{p})\leq h(\mathbf{p}^{(t)})+\nabla h^T(\mathbf{p}^{(t)})(\mathbf{p}-\mathbf{p}^{(t)}), \quad  \quad \forall \mathbf{p} 
\end{align*}
and we can deduce
\begin{align*}
h(\mathbf{p}^{(t+1)})\leq h(\mathbf{p}^{(t)})+\nabla h^T(\mathbf{p}^{(t)})(\mathbf{p}^{(t+1)}-\mathbf{p}^{(t)}). 
\end{align*}
Then it can be proved that in each iteration the solution of  problem \textbf{\textit{P}4} is improved as follows
\begin{align*}
&f(\mathbf{p}^{(t+1)})- h(\mathbf{p}^{(t+1)}) \geq \\ &f(\mathbf{p}^{(t+1)})- h(\mathbf{p}^{(t)})-\nabla h^T(\mathbf{p}^{(t)})(\mathbf{p}^{(t+1)}-\mathbf{p}^{(t)}) \\
&=\operatorname*{max}_{\mathbf{p}} \quad f(\mathbf{p})- h(\mathbf{p}^{(t)})-\nabla h^T(\mathbf{p}^{(t)})(\mathbf{p}-\mathbf{p}^{(t)}) \\
& \geq f(\mathbf{p}^{(t)})- h(\mathbf{p}^{(t)})-\nabla h^T(\mathbf{p}^{(t)})(\mathbf{p}^{(t)}-\mathbf{p}^{(t)}) \\
&=f(\mathbf{p}^{(t)})- h(\mathbf{p}^{(t)}).
\end{align*}
According to the above equations, the objective value after each iteration is either unchanged or improved and since the constraint set is compact it can be concluded that the above DC approach converges to a local maximum.  

\section{Analyzing Self-Interference Cancellation Coefficient Threshold} \label{Coefficient_Threshold}

In \cite{peyman}, through simulations it has been observed that in a network that contains an imperfect FD BS and some imperfect FD and HD users, when the self-interference  cancellation coefficient is larger than a specified threshold, there is no difference between the throughput of an all HD user network and an all FD user network. Here we wish to  analyze this threshold. \\ 
Recall that in  FD networks there are four possible types of connections in a given sub-channel:
\begin{enumerate}
  \item HD downlink
  \item HD uplink
  \item Joint downlink and uplink for two distinct users over an FD BS
  \item A full-duplex bidirectional connection between an FD user and an FD BS
\end{enumerate} 

Employing an FD user in a cell with an FD capable BS can increase the throughput when the 4th case is more appealing than the other cases in at least one sub-channel. Considering sub-channel $n$, we assume that user $d$ is the best downlink user, user $u$ is  the best uplink user, users $a$ and $b$ are the best downlink and uplink pair, and user $f$ is the best FD node for FD communication with the BS. Here the best user, is the user who gives the highest weighted rate with the same power than the rest. The rate of the four previous cases are presented below (we drop the sub-channel index $n$ for simplicity)
\begin{equation}
R_d = w_d \log \bigg(1+ \frac{g_d p_{Bd}}{N_d}\bigg)  
\end{equation}

\begin{equation}
R_u = v_u \log \bigg(1+ \frac{g_u p_{uB}}{N_0}\bigg) 
\end{equation}

\begin{align}
R_{du} = & w_{a} \log \bigg(1+ \frac{g_{a} p_{Ba}}{N_a+g_{ba}p_{bB}}\bigg) +   v_b \log \bigg(1+ \frac{g_b p_{bB}}{N_0+\beta p_{Ba}}\bigg)
\end{align}

\begin{equation}
R_{f} =  w_{f} \log \bigg(1+ \frac{g_{f} p_{Bf}}{N_f+\beta p_{fB}}\bigg) + v_f \log \bigg(1+ \frac{g_f p_{fB}}{N_0+\beta p_{Bf}}\bigg)
\end{equation}

where, $p_{Bx}$ and $p_{yB}$ are the transmission powers form BS to user $x$ and from user $y$ to the BS, respectively. The other variables were introduced in  Table I. In short, using an FD user in the network could be beneficial when these conditions hold in at least one sub-channel:

\begin{equation} \label{condition1}
R_{f}>R_{d} 
\end{equation}
\begin{equation} \label{condition2}
R_{f}>R_{u} 
\end{equation}
\begin{equation} \label{condition3}
R_{f}>R_{du}.			
\end{equation}

Since we wish to focus on parameter $\beta$, and to avoid dealing with other parameters,  we introduce some simplifications. First, we assume the sum rate case where, $w_d=v_u=w_{a}=w_{f}=v_b=v_f=1$. Second, we assume that the noise powers at the BS and at the users are the same. Third, we assume that the transmission power from the BS to all users is the same and is equal to the average BS power, i.e., $p_{Bd}=p_{Ba}=p_{Bb}=p_{Bf}=\frac{P_0}{N}=P_{BS}$. Fourth, we assume  that the transmission powers from different users to the BS are the same and are equal to the average user power, i.e., $p_{uB}=p_{bB}=p_{fB}=\frac{KP_K}{N}=P_{user}$. Fifth, the channel gains $g_d$, $g_u$, $g_f$, $g_a$, $g_b$, $g_{ba}$ are random variables and in the sum rate case the best downlink user, the best uplink user and the best FD user are all the same $g_d=g_u=g_f=g_{max}$, because the channel is reciprocal and the user with maximum channel gain is selected for all of these three cases.
If we assume that the number of users in the network is $K$, then random variable $g_{max}$  can be defined as:
$g_{max}=\text{max}\left\{g_1,g_2,...,g_K\right\}$, where $g_i$ is the random channel gain between the BS and the user $i$, that itself is a multiplication of an exponential random variable, $e_i$, with unit power and a path loss random variable $l_i$ that depends on the path loss model and the distance $d_i$ between the BS and the $i$th user whose pdf is shown by $f_{D_i}(d_i) = \frac{2d_i}{R_{cell}^2}$ (for $0<d_i<R_{cell}$). We assume $g_{ab}$ is a random channel gain between  two users $a$ and $b$ which are distributed uniformly in a circle with radius $R_{cell}$.We also assume that $g_a$ and $g_b$ are the maximum and the second maximum channel gain between $K$ users.

Due to the randomness of the channel gains, $\beta$ itself is a random variable and here we wish to derive its distribution. According to   conditions (\ref{condition1}) - (\ref{condition3}), we have:
\begin{enumerate}
\item The FD rate should be bigger than the HD downlink rate, so we have:
\begin{equation*}
\log (1+ \frac{g_{f} P_{BS}}{N_0+\beta P_{user}}) +  \log (1+ \frac{g_f P_{user}}{N_0+\beta P_{BS}}) > \log (1+ \frac{g_d P_{BS}}{N_0}) 
\end{equation*}
After some manipulations, this is reduced to the  inequality $a_1\beta^2 + b_1\beta - c_1<0$, where: 
\begin{align}
a_1&=g_d P^2_{BS} P_{user} \\
b_1&=N_0 g_f P_{user} (P_{BS}-P_{user}) \\
c_1&=N_0^2g_f P_{user}+N_0 g^2_fP_{BS}P_{user} 
\end{align}  

which holds for  $0 <\beta<\frac{-b_1+\sqrt{b_1^2+4a_1c_1}}{2a_1}$. Therefore, due to condition (\ref{condition1})  $\beta_1=\frac{-b_1+\sqrt{b_1^2+4a_1c_1}}{2a_1}$.
\item By writing the condition (\ref{condition2}) and doing the same procedure as the previous part we arrive at  inequality $a_2\beta^2 + b_2\beta - c_2<0$ 
where:      
\begin{align}
a_2&=g_u P^2_{user} P_{BS} \\
b_2&=N_0 g_u P_{BS} (P_{user}-P_{BS}) \\
c_2&=N_0^2g_f P_{BS}+N_0 g^2_fP_{BS}P_{user}
\end{align}   
which holds for  $0 <\beta<\frac{-b_2+\sqrt{b_2^2+4a_2c_2}}{2a_2}$.  Therefore,  according to  (\ref{condition2})  $\beta_2=\frac{-b_2+\sqrt{b_2^2+4a_2c_2}}{2a_2}$. 
\item By writing the condition (\ref{condition3}) and doing the same procedure as the previous parts we arrive at the  inequality $a_3\beta^3 + b_3\beta^2 + c_3\beta + d_3 <0$, where: 

\begin{align}
&a_3=P_{BS}^3 P_{user} g_a  \\
\nonumber &b_3=P_{user}^2 P_{BS} g_b N_0 + P_{user}^3 P_{BS} g_b g_{ab} + P_{user}^2  P_{BS}^2 g_b g_a + \\
\nonumber &P_{BS}^2 P_{user}  g_a N_0 + P_{BS}^3 g_a N_0 - P_{user}^2 P_{BS} g_f N_0 - \\
 & P_{BS}^3 g_f N_0 - g_f P_{BS}^3  P_{user} g_{ab} - P_{user}^3 P_{BS} g_f g_{ab}\\
 \nonumber &c_3 = P_{user}^2 P_BS g_b N_0 g_a +  P_{user}^3 g_{ab} N_0 g_b + P_{user}^2 N_0^2 g_b + \\
 \nonumber & P_{user} P_{BS} N_0^2 g_a + P_{BS}^2 P_{user} g_b N_0 g_a + P_{user}^2 P_{BS} g_b N_0 g_{ab} + \\
\nonumber &N_0^2 P_{BS} g_b P_{user} + 2  N_0^2 P_{BS}^2 g_a - P_{user}^3 g_f  g_{ab} N_0 - \\
\nonumber &P_{BS}^2 g_f g_{ab}  N_0 P_{user} - P_{BS}^2 g_f^2  N_0 P_{user} - N_0^2 g_f P_{BS} P_{user} - \\
\nonumber &2 P_{BS}^2 N_0^2 g_f - P_{user}^2 N_0^2 g_f - P_{BS}^2 P_{user} g_f g_{ab} N_0 - \\
& P_{BS} g_{ab}  P_{user}^2 N_0 g_{f} - g_f^2 P_{BS}^2 P_{user}^2 g_{ab} \\
\nonumber &d_3=P_{BS} N_0^3  g_a + P_{user} N_0^3 g_b + P_{user}^2   N_0^2  g_b g_{ab} + \\
\nonumber &P_{user}P_{BS} N_0^2  g_b g_a -  P_{user}^2   N_0  g_f^2 g_{ab} P_{BS} - P_{user}^2 g_f N_0^2 g_{ab} - \\
\nonumber  &P_{user} g_f N_0^2) g_{ab} P_{BS} - P_{user} g_f^2 N_0^2 P_{BS} - P_{user} g_f N_0^3  - \\
& P_{BS} g_f N_0^3
\end{align}

\normalsize
The above cubic function has the following three roots:
\begin{align}
x_1=& A + B -  \frac{b}{3}  \\
x_2=& \frac{-1}{2}(A+B) + \frac{i \sqrt{3}}{2}(A-B) -  \frac{b}{3} \\
x_3=& \frac{-1}{2}(A+B) - \frac{i \sqrt{3}}{2}(A-B) -  \frac{b}{3}
\end{align} 

where 
\begin{align}
A=& \sqrt[3]{\frac{-q}{2}+ \sqrt{\frac{q^2}{4}+\frac{p^3}{27}}}  \\
B=& \sqrt[3]{\frac{-q}{2}- \sqrt{\frac{q^2}{4}+\frac{p^3}{27}}}  \\
p=& \frac{-b^2}{3}+c \\
q=& \frac{2b^3}{27}-\frac{bc}{3} + d 
\end{align} 
 
\begin{equation}
b=\frac{b_3}{a_3}, \quad c=\frac{c_3}{a_3}, \quad d=\frac{d_3}{a_3}
\end{equation}
 
It is evident that  $a_3\geq 0$. Also, it can be shown that the value of $d_3$ is always negative, therefore, it is deduced that  the  cubic function has at least one positive real root. Therefore, due to the condition (\ref{condition3}), $\beta_3= \min \left\lbrace  R(x_1),R(x_2),R(x_3) \right\rbrace $ is the smallest positive real root of this cubic function, where $R(x)$ is given by:
\vspace{-3mm}
\[ R(x) = \left\{
\begin{array}{lr}
x & \mbox{if } x\mbox{ is real and positive}\\
\infty & \mbox{Otherwise}.
\end{array}
\right.
\]
Finally, we arrive at the following proposition:

\begin{prop}
For a wireless cell with an FD BS and users with imperfect SI cancellation factor $\beta$, FD operation is advantageous from the perspective of the network throughput performance if $\beta < \beta_{Threshold}=\min \left\lbrace \beta_1,\beta_2,\beta_3 \right\rbrace $.
\end{prop}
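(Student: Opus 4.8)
The plan is to show that the three conditions (\ref{condition1})--(\ref{condition3}) under which an FD user is beneficial in a sub-channel are, after the simplifications introduced in Section~\ref{Coefficient_Threshold} (unit weights, equal noise variances, common BS power $P_{BS}$, common user power $P_{user}$, and $g_d=g_u=g_f=g_{max}$), each equivalent to a single upper bound on $\beta$, and then to intersect the three resulting intervals. First I would recall the characterization already established in the text: deploying an FD user helps exactly when the bidirectional FD user--BS link (rate $R_f$) outperforms each of the three alternatives $R_d$, $R_u$, $R_{du}$ in at least one sub-channel, i.e. when (\ref{condition1}), (\ref{condition2}) and (\ref{condition3}) hold simultaneously. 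So it suffices to describe, for each inequality, the set of admissible $\beta\ge 0$.

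For (\ref{condition1}) and (\ref{condition2}) I would use the reductions to the quadratic inequalities $a_i\beta^2 + b_i\beta - c_i < 0$ for $i=1,2$, already derived in the excerpt. The key sign facts are $a_i\ge 0$ and $c_i>0$, both being sums of products of strictly positive powers, noise variances, and channel gains. Hence each quadratic is strictly negative at $\beta=0$ and tends to $+\infty$; since the product of its roots equals $-c_i/a_i<0$, it has one negative and one positive root, the positive one being $\beta_i=\frac{-b_i+\sqrt{b_i^2+4a_ic_i}}{2a_i}$. Therefore on $\beta\ge 0$ the quadratic inequality holds precisely for $\beta<\beta_i$, so (\ref{condition1}) $\iff \beta<\beta_1$ and (\ref{condition2}) $\iff \beta<\beta_2$.

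For (\ref{condition3}) I would invoke the reduction to the cubic inequality $a_3\beta^3+b_3\beta^2+c_3\beta+d_3<0$. Here the two facts to pin down are $a_3\ge 0$ (immediate, a single positive monomial) and $d_3<0$. The latter is the value of the cubic at $\beta=0$, i.e. condition (\ref{condition3}) with the self-interference terms removed; with $g_f=g_{max}\ge g_a,g_b$ this is intuitively expected to hold, but establishing it rigorously requires checking term by term in the expanded expression that the negative contributions dominate — this is the step I expect to involve the most bookkeeping. Granting $a_3\ge 0$ and $d_3<0$, the cubic is negative at $0$ and $\to+\infty$, so it possesses a smallest positive real root $\beta_3=\min\{R(x_1),R(x_2),R(x_3)\}$ with $x_1,x_2,x_3$ and $R(\cdot)$ as defined, and the inequality holds on $[0,\beta_3)$; hence $\beta<\beta_3$ is sufficient for (\ref{condition3}).

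Finally I would assemble the pieces: (\ref{condition1})--(\ref{condition3}) hold together whenever $\beta$ lies below all three bounds, i.e. whenever $\beta<\min\{\beta_1,\beta_2,\beta_3\}=:\beta_{Threshold}$, which is exactly the claim (and, in the degenerate case where this minimum exceeds $1$, FD is advantageous for the entire admissible range $0\le\beta\le 1$). The obstacle is not conceptual but arithmetic: verifying the sign of $d_3$, and to a lesser extent of $c_1$ and $c_2$, after expanding the simplified rate expressions, and confirming that it is the ``$+$'' root that is the relevant one for each quadratic; once these signs are in hand, the conclusion is just the intersection of three intervals of the form $[0,\beta_i)$.
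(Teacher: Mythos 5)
Your proposal follows essentially the same route as the paper: it reduces the three advantage conditions to the two quadratic inequalities and the cubic inequality in $\beta$, identifies the relevant positive roots $\beta_1,\beta_2,\beta_3$ via the sign facts $a_i\ge 0$, $c_i>0$, $d_3<0$, and intersects the resulting intervals to obtain $\beta<\min\{\beta_1,\beta_2,\beta_3\}$. Your added care about why the ``$+$'' root is the correct one (product of roots negative) and your flagging of the term-by-term verification of $d_3<0$ are welcome refinements of steps the paper merely asserts, but the argument is the same.
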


\end{enumerate} 
In Section \ref{simulation_results}, we will compare the outcome of this analysis with simulation results.

\section{Two-Tier Heterogeneous Full Duplex Network} \label{Two-tier}

In this section, we consider a two-tier heterogeneous full-duplex OFDMA network. This system includes a macrocell FD BS and multiple femto cell FD BSs along with their associated HD and FD users. Our goal is to maximize the  uplink and downlik weighted sum rate of femto cell users while provisioning for the macrocell user's uplink and downlink data rate. 
Assume that the numbers of femto cells and  available sub-channels are  $M_f$ and $N$, respectively, and the number of users related to the $m$th BS is $K_m$. We denote the set of all BSs as $\Omega= \lbrace 0,1,2,...,M_f \rbrace$, where the macro BS is indexed by 0. The variables used in the following equations are summarized in Table II.

The downlink rate in cell $m$ is given by:

\begin{align}
&R_{d,m}=\sum_{k=1}^{K_m}\sum_{n\in S_{k,d,m}}w_{k,m}\log \bigg(1+ \nonumber \\
 & \frac{g_{k,m}(n)p_{k,d,m}(n)}{N_{k,m}+\sum_{j=1}^{K_m} I_{k,j,m}(n)p_{j,u,m}(n)+ DIC_{m,k}(n) + UIC_{m,k}(n) }\bigg)
\end{align}

where $DIC_{m,k}(n)$ and $UIC_{m,k}(n)$ are the downlink and uplink inter-cell interference on sub-channel $n$ in the $m$th cell for user $k$, i.e.: 
\begin{equation}
DIC_{m,k}(n)=\sum_{m^{'} \in \Omega \text{\textbackslash}  \{m\} } g_{k,m,m^{'}}(n) p_{d,m^{'}}(n)
\end{equation}
\begin{equation}
UIC_{m,k}(n)=\sum_{m^{'} \in \Omega \text{\textbackslash}  \{m\} } \sum_{j=1}^{K_{m^{'}}} g_{k,m,j,{m^{'}}}(n) p_{j,u,m^{'}}(n)
\end{equation}

Similarly the uplink rate in cell $m$ is given by:
\begin{align}
&R_{u,m}=\sum_{j=1}^{K_m}\sum_{n\in S_{j,u,m}}v_j\log \bigg(1+ \nonumber \\
 & \frac{g_{j,m}(n)p_{j,u,m}(n)}{N_m+ \beta \sum_{k=1}^{K_m} p_{k,d,m}(n)+ DIC_m(n) + UIC_m(n) } \bigg)
\end{align}
where $DIC_{m}(n)$ and $UIC_{m}(n)$ are the downlink and uplink inter-cell interference on sub-channel $n$ at the $m$th BS, i.e.:
\begin{equation}
DIC_{m}(n)=\sum_{m^{'} \in \Omega \text{\textbackslash}  \{m\} } g_{m,m^{'}}(n) p_{d,m^{'}}(n)
\end{equation}                                                      
\begin{equation}
UIC_{m}(n)=\sum_{m^{'}\in \Omega \text{\textbackslash}  \{m\} } \sum_{j=1}^{K_{m^{'}}}g_{j,{m^{'}},m}(n) p_{j,u,m^{'}}(n)
\end{equation}
\begin{table}[t!] \label{table:Heterogeneous Problem}
\centering
 \caption{Key Variables For The Heterogeneous Network}
 \begin{tabular}{|c | m{38em}|}

 \hline
 $w_{k,m}$ & weight assigned to the downlink  of user $k$ in the $m$th cell  \\
 \hline
 $v_{j,m}$ & weight assigned to the uplink of user $j$ in the $m$th cell \\
 \hline
  $p_{k,d,m}(n)$ &  downlink transmission power from the BS to user $k$ on sub-channel $n$ in the $m$th cell \\
 \hline
  $p_{j,u,m}(n)$ & uplink transmission power from user $j$ to the BS on sub-channel $n$ in the $m$th cell\\
 \hline
 $p_{d,m^{'}}(n)$ & downlink transmission power from the $m^{'}$th  BS on sub-channel $n$ \\
 \hline
 $g_{k,m,m^{'}}(n)$ & channel gain between user $k$ in the $m$th cell and the $m^{'}$th BS on sub-channel $n$ \\
 \hline
 $g_{k,m,j,{m^{'}}}(n)$ & channel gain between  user $k$ in the $m$th cell and user $j$ in the $m^{'}$th cell on sub-channel $n$ \\
  \hline
  $g_{k,m}(n)$ & channel gain between the BS and user $k$ on sub-channel $n$ in the $m$th cell\\
  \hline
  $g_{m,m^{'}}(n)$ & channel gain between the $m$th BS and the $m^{'}$th BS on sub-channel $n$\\
 \hline
  $ N_{k,m}$  & Gaussian noise variance at the receiver of user $k$ in the $m$th cell \\
 \hline
  $N_m$ & Gaussian noise variance at the $m$th base-station receiver\\
 \hline
  $S_{k,d,m}$ & set of sub-channels allocated to user $k$ for downlink transmission in the $m$th cell\\
 \hline
  $S_{j,u,m}$ & set of sub-channels allocated to user $j$ for uplink transmission in the $m$th cell\\
 \hline
  $\beta$ & self-interference cancellation coefficient \\
 \hline
 $I_{k,j,m}(n)$ & equal to $\beta$ when $j=k$, and to $g_{k,m,j,{m^{}}}(n)$ otherwise\\
 \hline
 $P^{BS}_m$ & maximum available transmit power at the $m$th BS\\
 \hline
 $P_{k,m}$ & maximum available transmit power at user $k$ in the $m$th cell\\
 \hline
 $R_{mind}$ & minimum required downlink rate for the macrocell\\
 \hline
 $R_{minu}$ & minimum required uplink rate for the macrocell\\
 \hline
 \end{tabular}
\end{table}

The optimization problem for the heterogeneous network can be formulated as follows:
\begin{equation}\label{eq:21}
\textbf{\textit{P}}_{\textbf{Het}}: \operatorname*{Maximize}_{p_{k,d,m},p_{j,u,m},S_{j,u,m},S_{k,d,m}} \qquad  \sum_{m=1}^{M_f} R_{d,m}+R_{u,m}
\end{equation}
\begin{equation}\label{eq:22}
\text{Subject to }   \sum_{k=1}^{K_m}\sum_{n\in S_{k,d,m}}p_{k,d,m}(n) \leq P^{BS}_m \quad \forall m
\end{equation}
\begin{equation}\label{eq:23}
\sum_{n\in S_{j,u,m}}p_{j,u,m}(n) \leq P_{j,m} \quad \forall j,m
\end{equation}
\begin{equation}\label{eq:24}
R_{d,0} \geq R_{mind}, \ R_{u,0} \geq R_{minu}
\end{equation}
\begin{equation}\label{eq:26}
p_{j,u,m}(n),p_{k,d,m}(n)\geq 0 \quad \forall j,k,n,m
\end{equation}
\begin{equation}\label{eq:27}
S_{i,d,m}\cap S_{j,d,m}=\phi, \ S_{i,u,m}\cap S_{j,u,m}=\phi \quad \forall i\neq j,   \forall m
\end{equation}
\begin{equation}\label{eq:29}
\cup_{j=1}^{K_m}  \ S_{j,u,m} \subseteq \{1,2,...,N\}, \ \cup_{k=1}^{K_m} \ S_{k,d,m} \subseteq \{1,2,...,N\} \quad \forall m
\end{equation}
\begin{equation}\label{eq:31}
S_{k,u,m} \cap S_{k,d,m}=\phi \quad \text{if user} \ k \  \text{is} \ \text{HD} \quad \forall m
\end{equation}
where (\ref{eq:22}) and (\ref{eq:23}) indicate the power constraint on the BSs and the users, respectively; (\ref{eq:24}) is the minimum downlink and uplink rate constraints for the macrocell users. The constraint (\ref{eq:26}) shows the non-negativity of transmission powers; (\ref{eq:27}) comes from the fact that a sub-channel cannot be allocated to two distinct users simultaneously; (\ref{eq:29}) indicates that we have no more than $N$ sub-channels, and the last constraint accounts for the half-duplex nature of the HD users.  
\begin{figure}
 \centering
  \includegraphics[width=9cm, height=6.5cm]{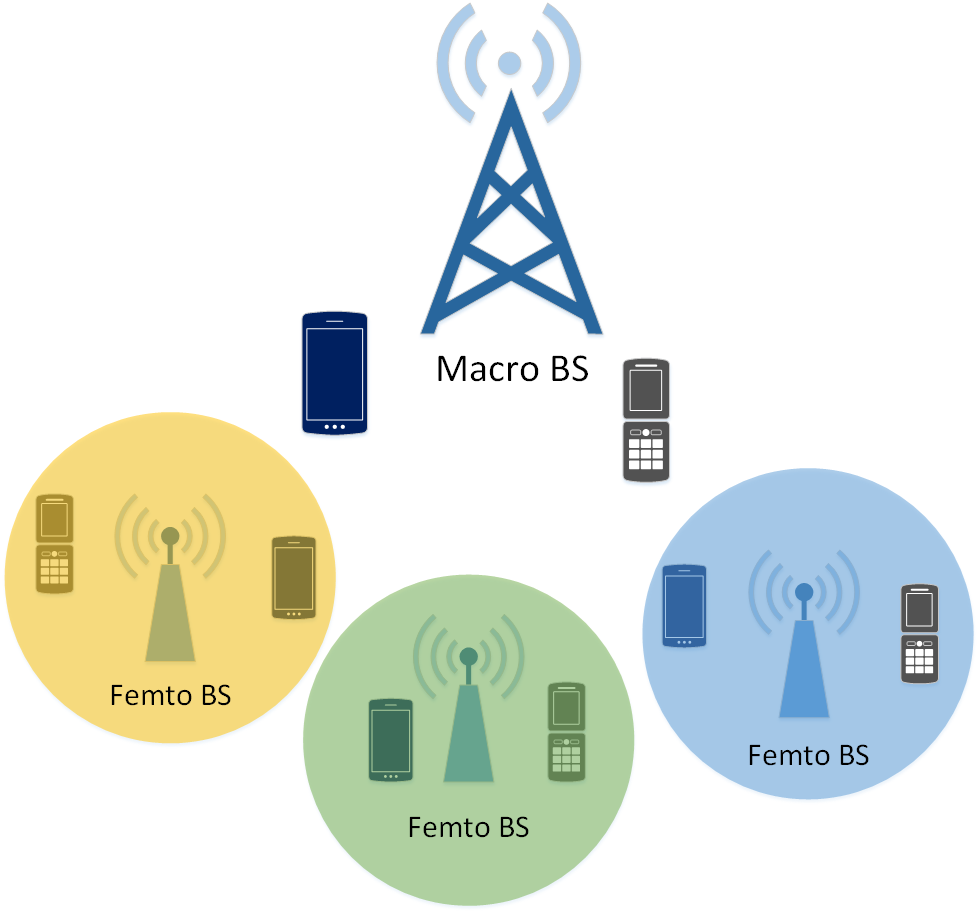}
  \caption{\small{A two-tier heterogeneous OFDMA full-duplex network that contains an imperfect full-duplex macro base-station and multiple femto cell BSs and their associated users.}}
\vspace{-8mm}
\end{figure}  
To address this problem, we propose a scheme which optimizes power allocation and sub-channel assignment in an iterative manner. At the beginning of each iteration $t$, we find the
proper sub-channel assignment $\mathbf{S}[t]$  for the power allocation
obtained from the last iteration $\mathbf{P}[t-1]$. Then for this $\mathbf{S}[t]$, we find the
optimal power allocation. We repeat the process in all subsequent iterations until no further noticeable improvement is observed, i.e.: 
\begin{align*}
\mathbf{S}[0]\longrightarrow \mathbf{P}[0] \longrightarrow  \cdots\cdots \longrightarrow \mathbf{S}[t]\longrightarrow \mathbf{P}[t] 
\end{align*}
 
At the first iteration for sub-channel allocation, in each femto cell, sub-channels are allocated based on  Algorithm 1 in Section \ref{subchannel allocation} without considering the inter-cell interference. At the macro cell, since uplink-downlink rate constraints are to be satisfied, additional  considerations are required. Algorithm 2 presents a solution for the rate-constrainted sub-channel allocation. Depending on whether $R_{mind}$ or $R_{minu}$ is larger, the algorithm allocates a downlink or uplink sub-channel and then  the estimated resulting rate of the new added sub-channel is subtracted from the required minimum rate. This procedure is repeated until  both  $R_{mind}$ and $R_{minu}$ become equal or less than zero. In this case, a  sufficient number of sub-channels has been allocated to uplink and downlink users in order to satisfy the rate constraints. After that, the  algorithm switches to the one without the rate constraint.
\small{
\begin{table}[h!]
\centering
 \begin{tabular}{m{30em}}
 \hline
 \textbf{Algorithm 2: Sub-channel Allocation Algorithm with Rate Constraint} \\
 \hline\hline
1.\textbf{for} $n=1$ to $N$ \textbf{do} \\ 
2.\quad $\tilde{g}({n})=\operatorname*{max}_k g_k(n) $ \\
3.\textbf{end for} \\
4.Find a sub-channel permutation $\{ a_1,...,a_N \}$, $a_i\in\{1,...,N\}$, $a_i\neq a_j$ such that\\
\hspace{0.5cm} $\tilde{g}({a_1}) \geq \tilde{g}({a_2}) \geq . . .\geq \tilde{g}({a_N})$ \\
5. set $d_k(l)=1$  for $0\leq k \leq K$  and $1 \leq l \leq N$ \\
6.\textbf{for} $l=1$ to $N$ \textbf{do} \\
7.\quad Set $P_{max1}=\dfrac{P_0}{d_0(l)}$  and  $P_{max2}=\dfrac{P_k}{d_k(l)}      \forall k$  \\
8.\quad   \textbf{if}($(R_{mind} \geq 0$) and $(R_{mind} \geq R_{minu})$ )    \\
\quad  \quad  \textbf{begin} \\
9.\quad \quad  In sub-channel $a_l$ find the best downlink user $k^*$  \\
10.\quad  \quad $R_{mind} \longleftarrow R_{mind}- L(\frac{P_{0}}{d_{0}(l)},0)$\\
11.\quad  \quad  $S_{k^*,d} \leftarrow [{S_{k^*,d},a_l}]$ and  $d_0(n)=d_0(n)+1$; \\
\quad  \quad   \textbf{end} \\
12.\quad  \textbf{elseif}($(R_{minu} \geq 0)$ and  ($R_{minu} \geq R_{mind}$)) \\
\quad  \quad \textbf{begin} \\
13.\quad \quad  In sub-channel $a_l$ find the best uplink user $j^*$  \\
14.\quad  \quad $R_{minu} \longleftarrow R_{minu}- L(0,\frac{P_{j^*}}{d_{j^*}(l)} )$\\
15.\quad  \quad  $S_{j^*,u} \leftarrow [{S_{j^*,u},a_l}]$  and  $d_{j^*}(l)=d_{j^*}(l)+1$;  \\
\quad \quad  \textbf{end}\\
16. \quad  \textbf{else}  \\
\quad  \quad \textbf{begin} \\
17. \quad \textbf{for} $k=1$ to $K$ \textbf{do} \\
18. \quad \quad 	\textbf{for} $j=1$ to $K$ (if $k$ is an HD user $j\neq k$) \\
19.\quad \quad \quad In sub-channel $a_l$ solve the problem  \textbf{\textit{P}3}  \\
20.\quad \quad \textbf{end for}
\\
21.\quad  \textbf{end for}
\\
22.\quad Using the obtained optimal powers, find the best pair $(k^*,j^*)$ in the \\
 \quad \quad sub-channel $a_l^*$ that  has the largest value of $L$
\\
23.\quad  $S_{j^*,u} \leftarrow [{S_{j^*,u},a_l}]$   ,   $S_{k^*,d} \leftarrow [{S_{k^*,d},a_l}]$ \\
24.\quad \textbf{if} $p_{k^*}\neq 0$ then $d_0(n)=d_0(n)+1$;\\
25.\quad \textbf{if} $p_{j^*}\neq 0$ then $d_{j^*}(n)=d_{j^*}(n)+1$; \\
\quad  \quad \textbf{end}\\
26.\textbf{end for} \\
 \hline
 \end{tabular}
\end{table}
}\normalsize
After sub-channel allocation at the first iteration, we perform power allocation by using the DC approach as described  in Section \ref{power_allocation} in order to convexify the objective function and the rate constraints. Then, for the next iterations, we perform sub-channel allocation by considering the inter-cell interference. To choose the best pair in each cell we use   \textbf{Proposition 1} in  Section \ref{subchannel allocation}. For a heterogeneous network, where $N_k$ is replaced by  $N_k+DIC_{m,k}(n) + UIC_{m,k}(n)$ because in addition to Gaussian noise at the $k$th user we should take into account the uplink and downlnk interference from other cells. Similarly, because of the uplink and downlink interference at the BS we replace $N_0$ with $N_0+DIC_{m}(n) + UIC_{m}(n)$. The proof of convergence of power iterations is the same as in Section \ref{power_allocation}. For the sub-channel iterations, as many interfering nodes exist, the mathematical proof of convergence is intractable, but simulation results show that it converges to a local maximum.

\section{Simulation Results}
\label{simulation_results}
In this Section, we evaluate the proposed resource allocation scheme for OFDMA networks with half-duplex and imperfect full-duplex nodes.  We assume a time-slotted system, where nodes are uniformly distributed within a given cell radius. Table III presents the details of the indoor and outdoor simulation setup and channel models for the single cell network. In addition to the path loss, a Rayleigh block fading channel model with unit average power is considered. The channel gains remain constant in each time slot and vary independently from one time slot to the next.

\begin{table}
\centering
 \caption{simulation parameters}
 \begin{tabular}{|m{13em} | m{28em}|}

 \hline
 \textbf{PARAMETER} & \textbf{VALUE} \\
 \hline
 Maximum BS Power (outdoor) & $43$ dBm \\
 \hline
 Maximum BS Power (indoor) & $24$ dBm \\
 \hline
  Maximum UE Power $(P_k)$ & $23$ dBm \\
 \hline
  Thermal Noise Density & $-170$ dBm/Hz\\
 \hline
  Number of Sub-channels $N$ & $64$\\
 \hline
  Total Bandwidth & $10$ MHz \\
 \hline
  Sub-channel Bandwidth  & $150$ KHz \\
  \hline
  Cell Radius (outdoor) & $1$ km\\
   \hline
  Cell Radius (indoor) & $20$ m\\
 \hline
 Center Frequency & $2$ GHz\\
 \hline
 BS to UE Path Loss (outdoor) & urban Hata model with parameters                                  $h_m= 1.5$ m, $h_B= 30$ m \\   
 \hline
 UE to UE Path Loss (outdoor) & urban Hata model with parameters                                  $ h_m= 1.5$ m, $h_B= 1.5$ m \\
 \hline
  Path Loss Model (indoor) & ITU model for indoor attenuation with parameters N= $22$, $p_f(n)= 9$ \\   
 \hline
 \end{tabular}
\end{table}

For comparison of different cases in a single cell network, we consider six schemes: (i) An HD uplink system (HD-U), (ii) An HD downlink system (HD-D), (iii) a system that includes an FD BS and  HD users (FD-HD), (iv) a system that contains  an FD BS and FD users (FD-FD), (v) an upper bound which is the HD uplink rate plus the HD downlink rate; (vi) a Hybrid HD scheme (HHD), in which a hybrid HD BS could transmit data to downlink users and receive data from uplink users simultaneously in different sub-channels. 
For the HD-D case, each sub-channel is allocated to the user with the best weighted channel SNR, and multi-level water filling \cite{seong2006optimal} is applied  for power allocation. For the sub-channel assignment of the HD-U scheme the $SOA1 \: 4B \: 5A$ method presented in \cite{huang2009joint} is used, and for power allocation each user performs water filling in its dedicated sub-channels. In the HHD scheme, we use the proposed sub-channel allocation algorithm by changing the set $P_{opt}$ to:
 \begin{align*}
P_{opt}=\left\{(0,P_{max2}),(P_{max1},0)\right\}
\end{align*}
and perform multi-level water filling and water filling for the power allocation in the selected downlink and uplink sub-channels, respectively.

\begin{figure}[t!]
\begin{minipage}{\linewidth}
\includegraphics[width=\linewidth]{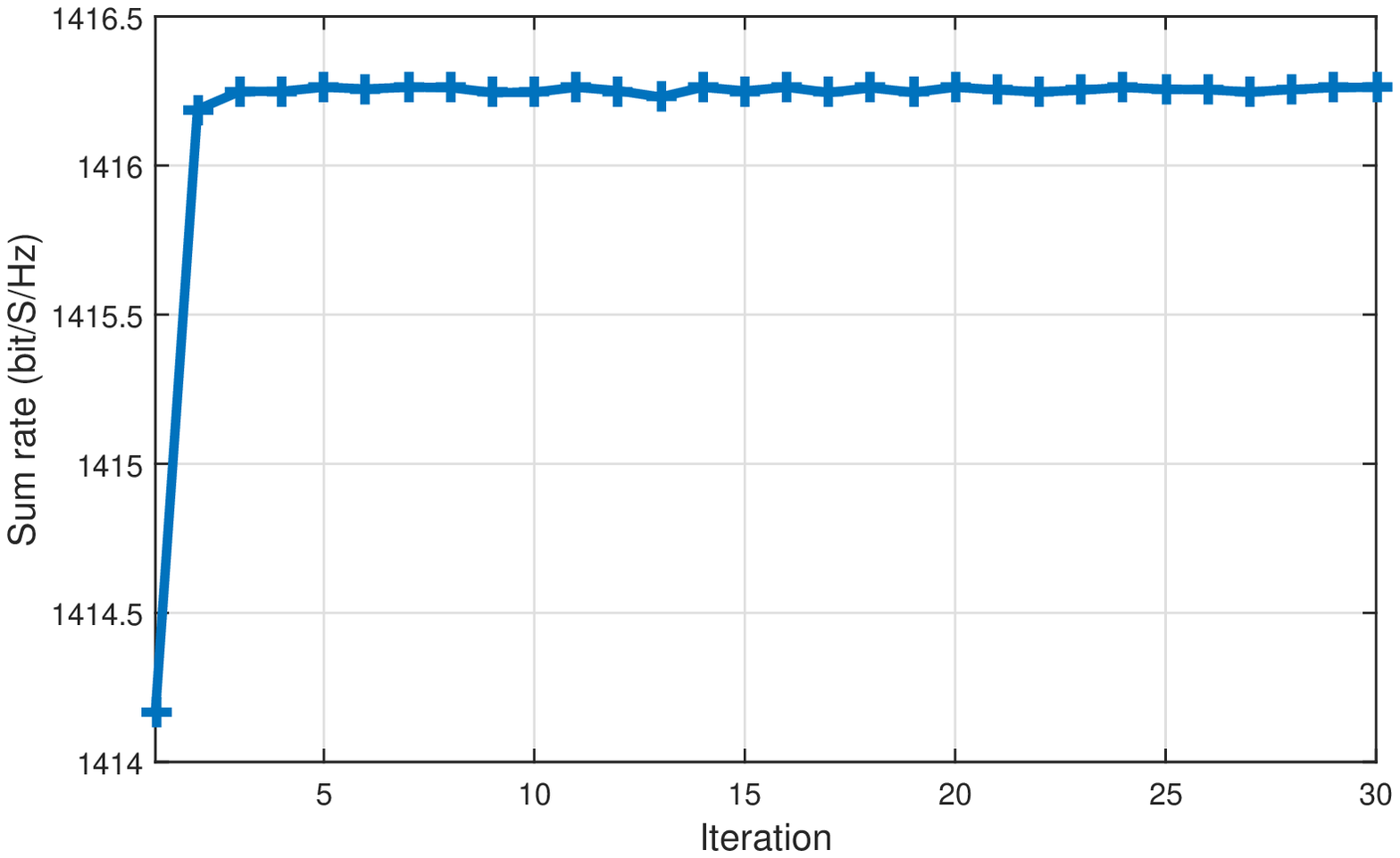}
\caption{\small{Convergence behavior of the proposed algorithm. Here $\beta=10^{-6}$ and other simulation parameters are the same as in the outdoor case.}}
\label{fig:convergence}
\end{minipage}
\hfill
\begin{minipage}{\linewidth}
\includegraphics[width=\linewidth]{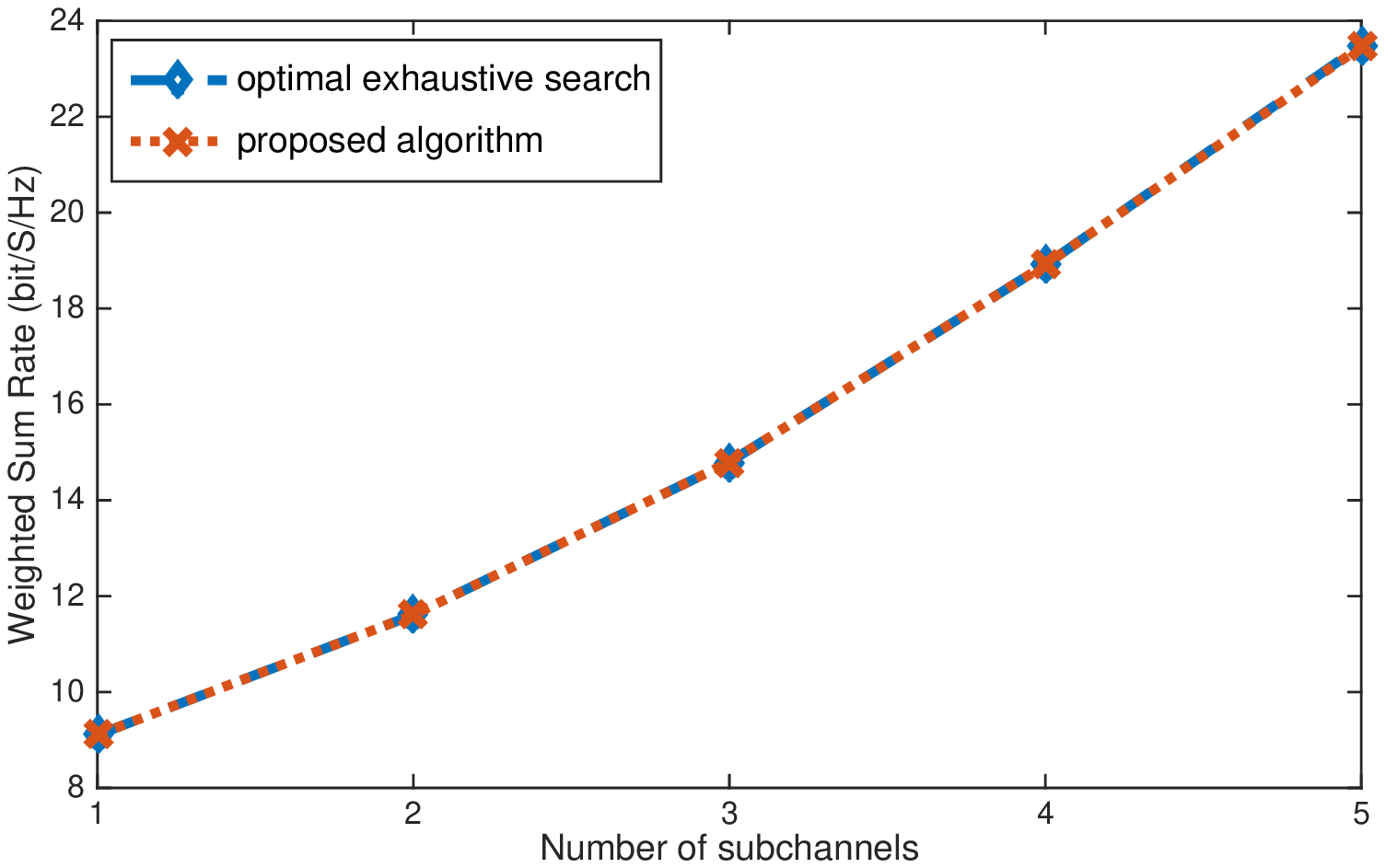}
\caption{\small{Performance comparison of the proposed algorithm with optimal exhaustive search in a small network.}}
\label{fig:Compare_Optimal}
\end{minipage}%
\vspace{-8mm}
\end{figure}

Fig. \ref{fig:convergence} illustrates the convergence of the proposed resource allocation scheme in a single cell OFDMA network with 10 HD nodes and 10 imperfect FD nodes. As can be seen, the sum-rate converges in just a few iterations.

Fig. \ref{fig:Compare_Optimal} compares the proposed algorithm with the optimal exhaustive search solution. Due to the high computational complexity of exhaustive search, only a small network with one HD and one FD user and a small number of sub-channels can be considered. Uplink and downlink weight vectors are assumed to be $\mathbf{u}=[1/3, 2/3]^{T}$ and $\mathbf{w}=[2/3, 1/3]^{T}$ respectively, and the SI cancellation coefficient is set to $\beta = -90$ dB. Simulation results show that, at least for small size networks, our proposed algorithm achieves the performance of the optimal exhaustive search.

%

Fig. \ref{fig:all_outdoor} shows the sum-rate of the different schemes in the outdoor scenario with perfect SI cancellation ($\beta=0$). It can be seen that when the BS and all nodes are perfect FD devices the upper-bound could be attained, and when the nodes are HD but the BS is FD the sum-rate is still bigger than the cases with HD BS, but it can not reach the upper-bound because of inter-node interference.

Fig. \ref{fig:all_indoor} shows the sum-rate of the six presented schemes in an indoor scenario. If we compare the outdoor 
and indoor scenarios we find that using an FD BS in an outdoor environment has much larger gain  than  using it in an indoor case. This result is  intuitive because in the outdoor environment the distances between  nodes are larger, and hence the inter-node interference is smaller. As a result, the FD BS could work in FD mode in more sub-channels, which helps increase the network throughput more significantly.
\begin{figure}[t!]
\begin{minipage}{\linewidth}
\includegraphics[width=\linewidth]{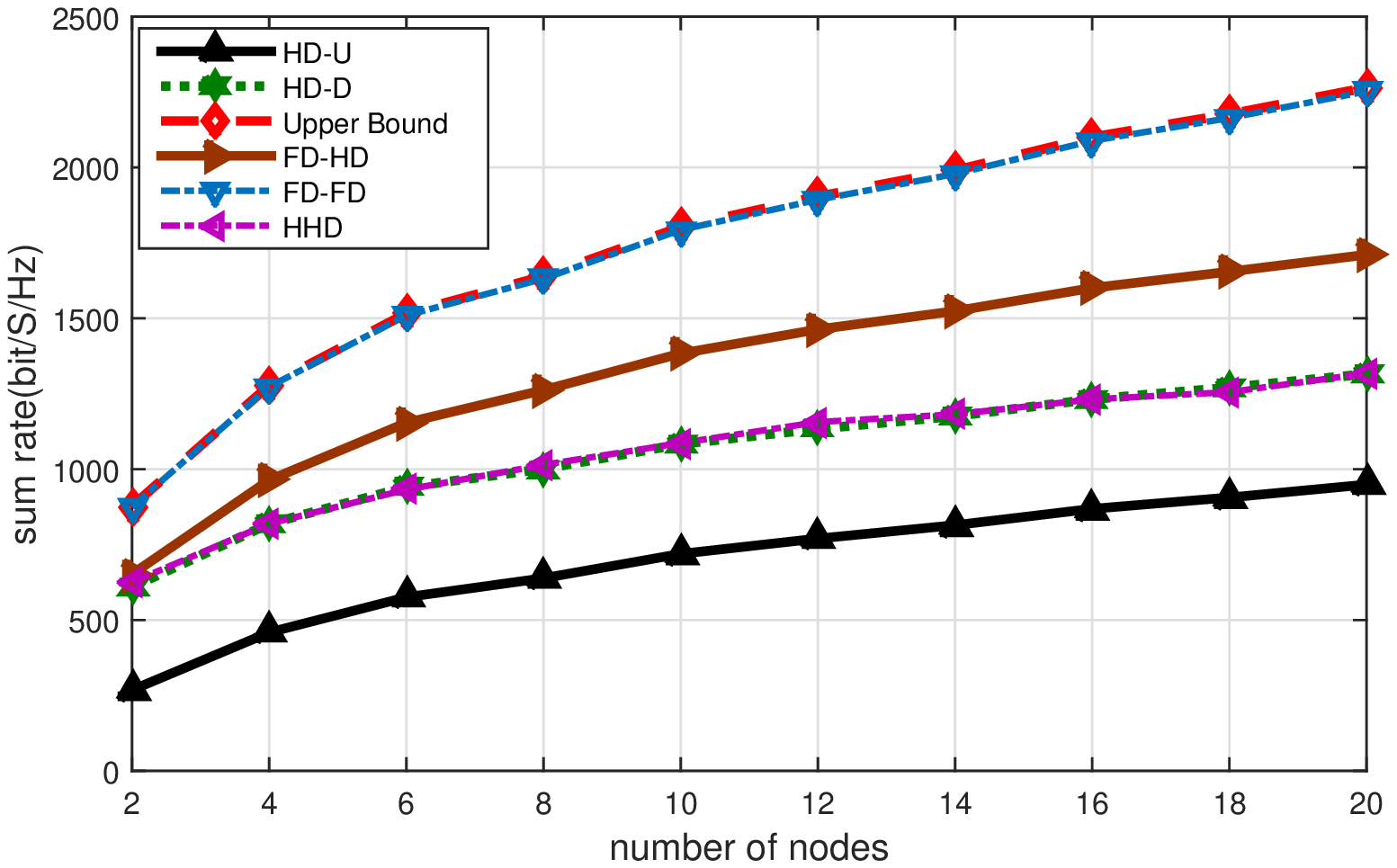}
\caption{\small{Performance comparison of six schemes in FD and HD networks in the outdoor scenario.}}
\label{fig:all_outdoor}
\end{minipage}
\hfill
\begin{minipage}{\linewidth}
\includegraphics[width=\linewidth]{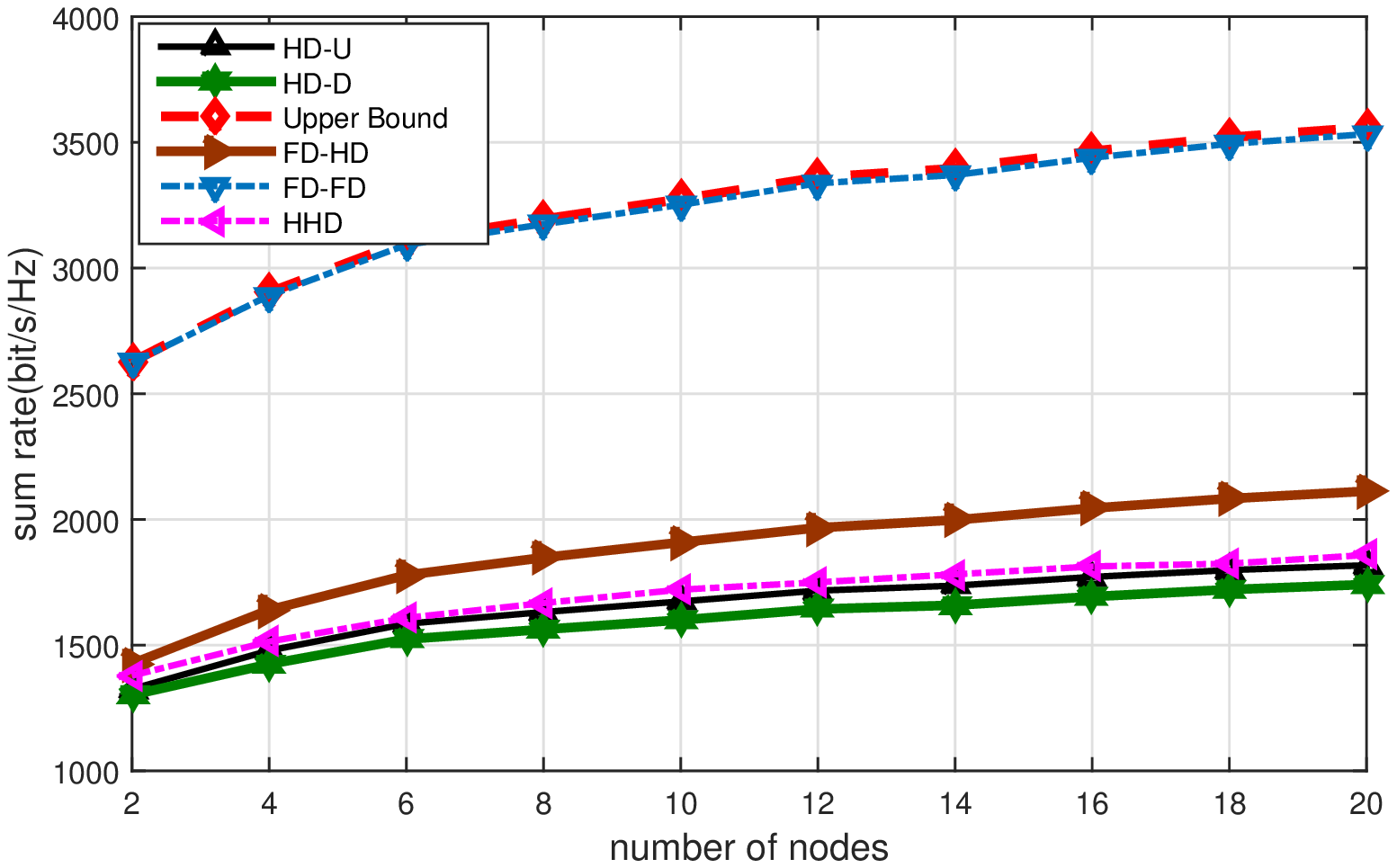}
\caption{\small{Performance comparison of six schemes in FD and HD networks in the indoor scenario.}}
\label{fig:all_indoor}
\end{minipage}%
\vspace{-8mm}
\end{figure}

\begin{figure}[t!]
\begin{minipage}{\linewidth}
\includegraphics[width=\linewidth]{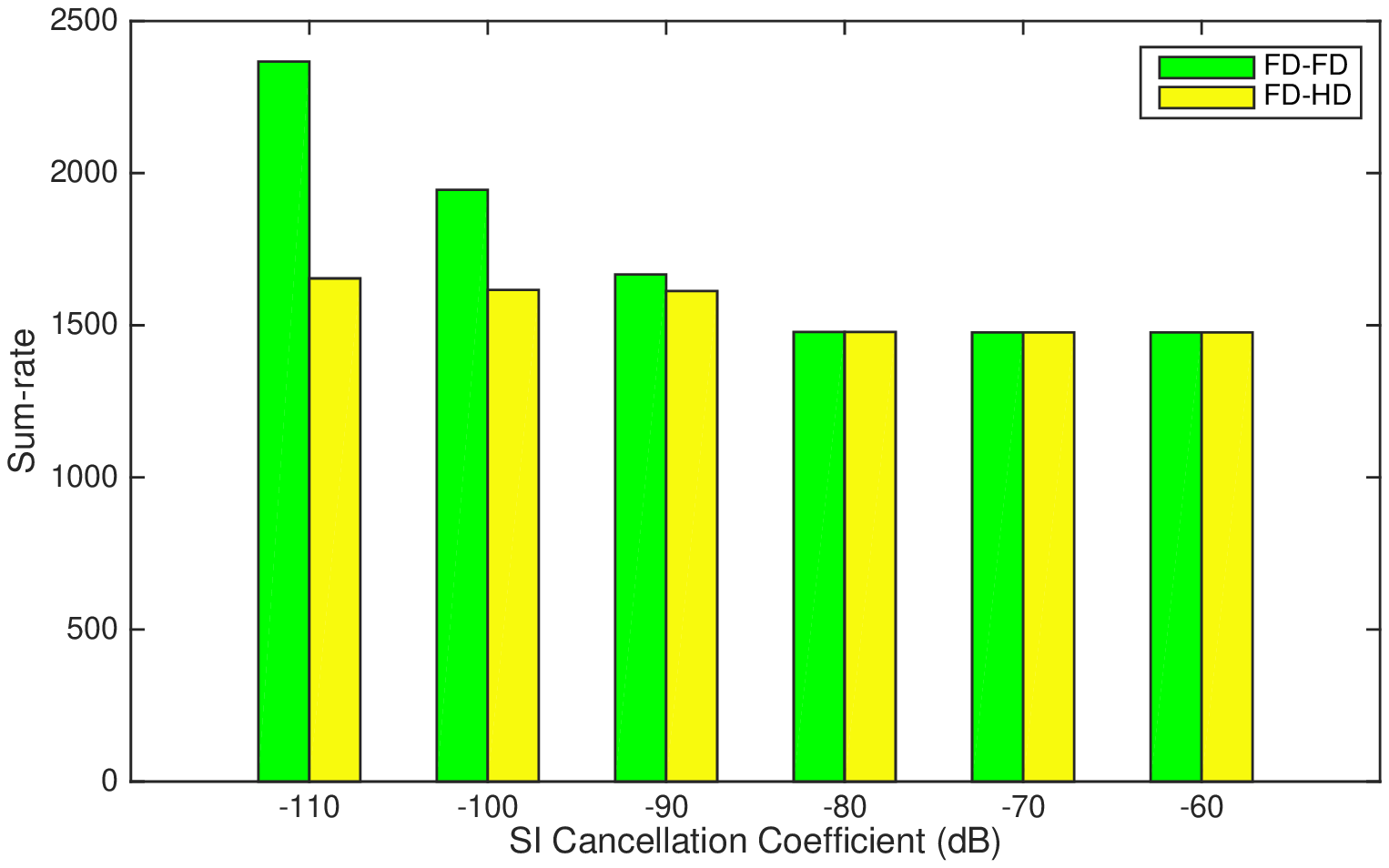}
\caption{\small{Effect of the self interference cancellation coefficient on the FD network capacity in the indoor scenario.}}
 \label{fig:beta_threshold_indoor}
\end{minipage}
\hfill
\begin{minipage}{\linewidth}
\includegraphics[width=\linewidth]{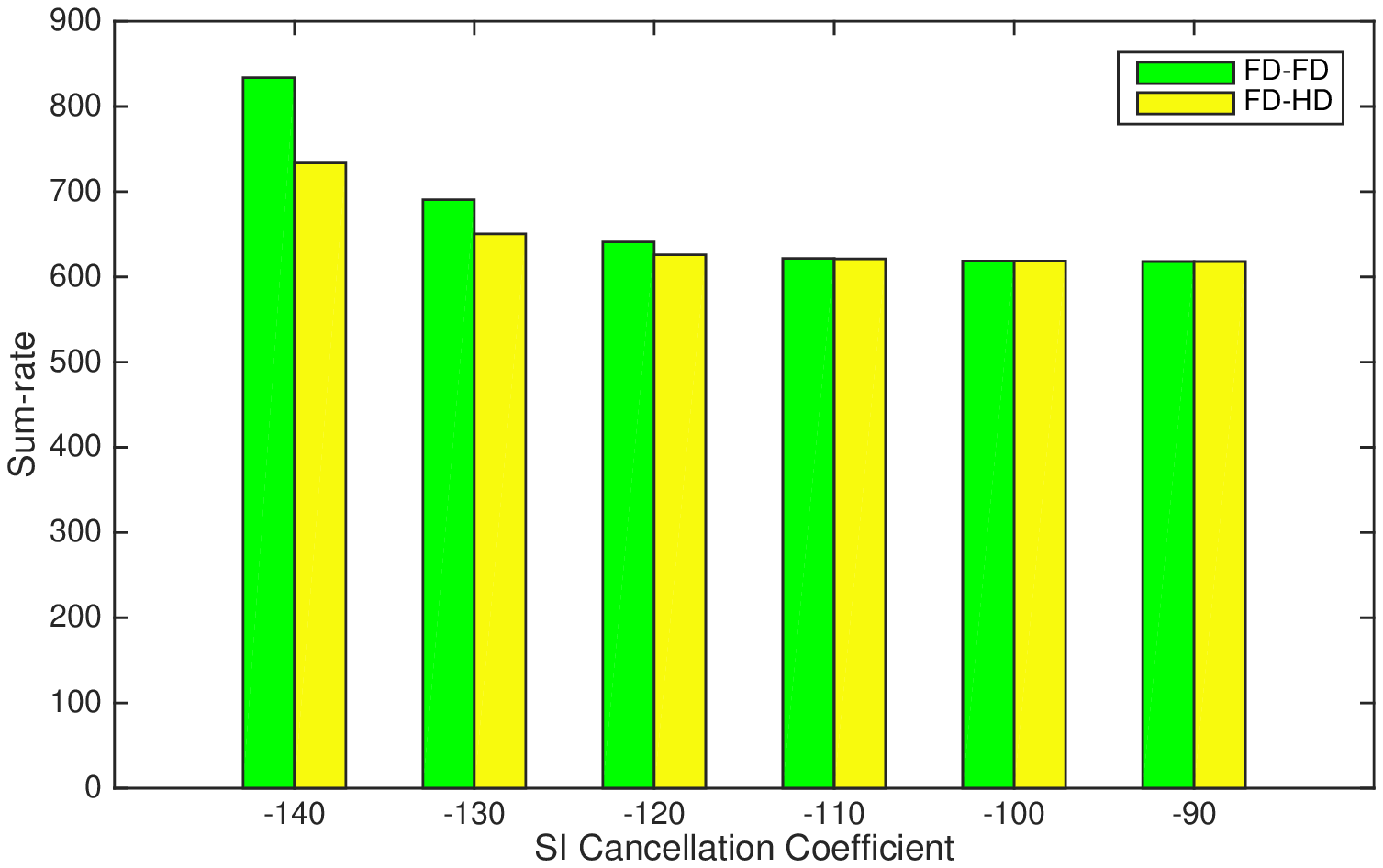}
  \caption{\small{Effect of the self interference cancellation coefficient on the FD network capacity in the outdoor scenario}}
          \label{fig:beta_threshold_outdoor}
\end{minipage}%
\vspace{-8mm}
\end{figure}

 Fig. \ref{fig:beta_threshold_indoor} compares the sum-rates of an FD-FD network and an FD-HD network for different values of $\beta$ in the indoor scenario. It can be seen that when $\beta$ is larger than a specified threshold, which is near $-90$ dB, there is no difference between the sum-rate of the all HD user case and the sum-rate of the all FD user case. The reason is that when $\beta$ is large relative to the inter-node interference, FD users prefer to work in HD mode in order to increase their rate, hence the sum-rates of FD-FD and FD-HD become equal.  

In Fig. \ref{fig:beta_threshold_outdoor}, the same experiment is repeated for the outdoor scenario. Here the threshold $\beta$ is approximately $-120$ dB which is much smaller than in the indoor case. Since the inter-node interference in the outdoor environment is smaller, the SI cancellation coefficient should be very small to make the FD mode worthwhile for the FD users. 

\begin{figure}[t!]
\begin{minipage}{\linewidth}
\includegraphics[width=\linewidth]{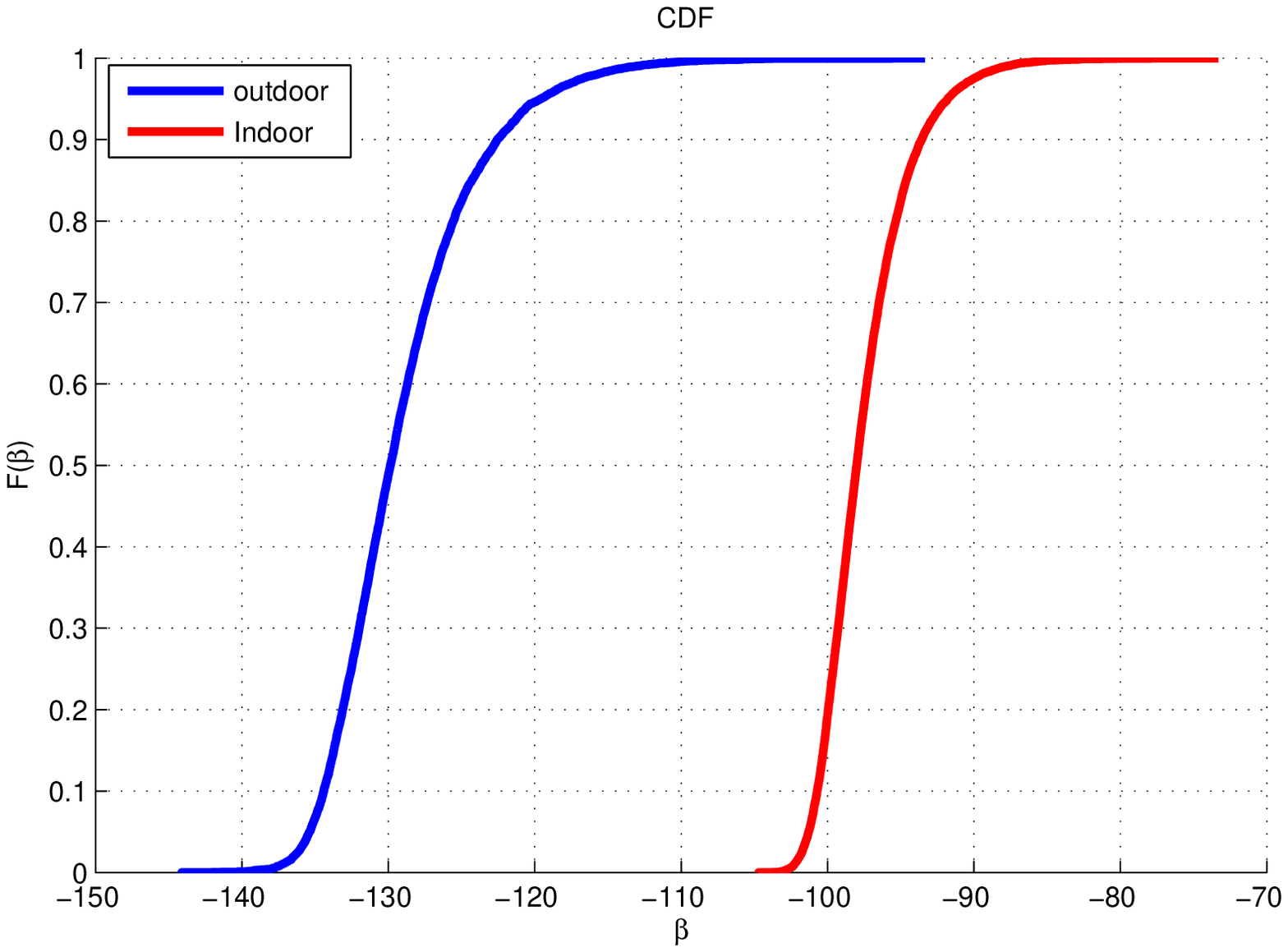}
  \caption{\small{Cumulative Distribution Function of the threshold $\beta$ for the indoor and outdoor environment} }
  \label{fig:CDF_compare}
\end{minipage}
\hfill
\begin{minipage}{\linewidth}
\includegraphics[width=\linewidth]{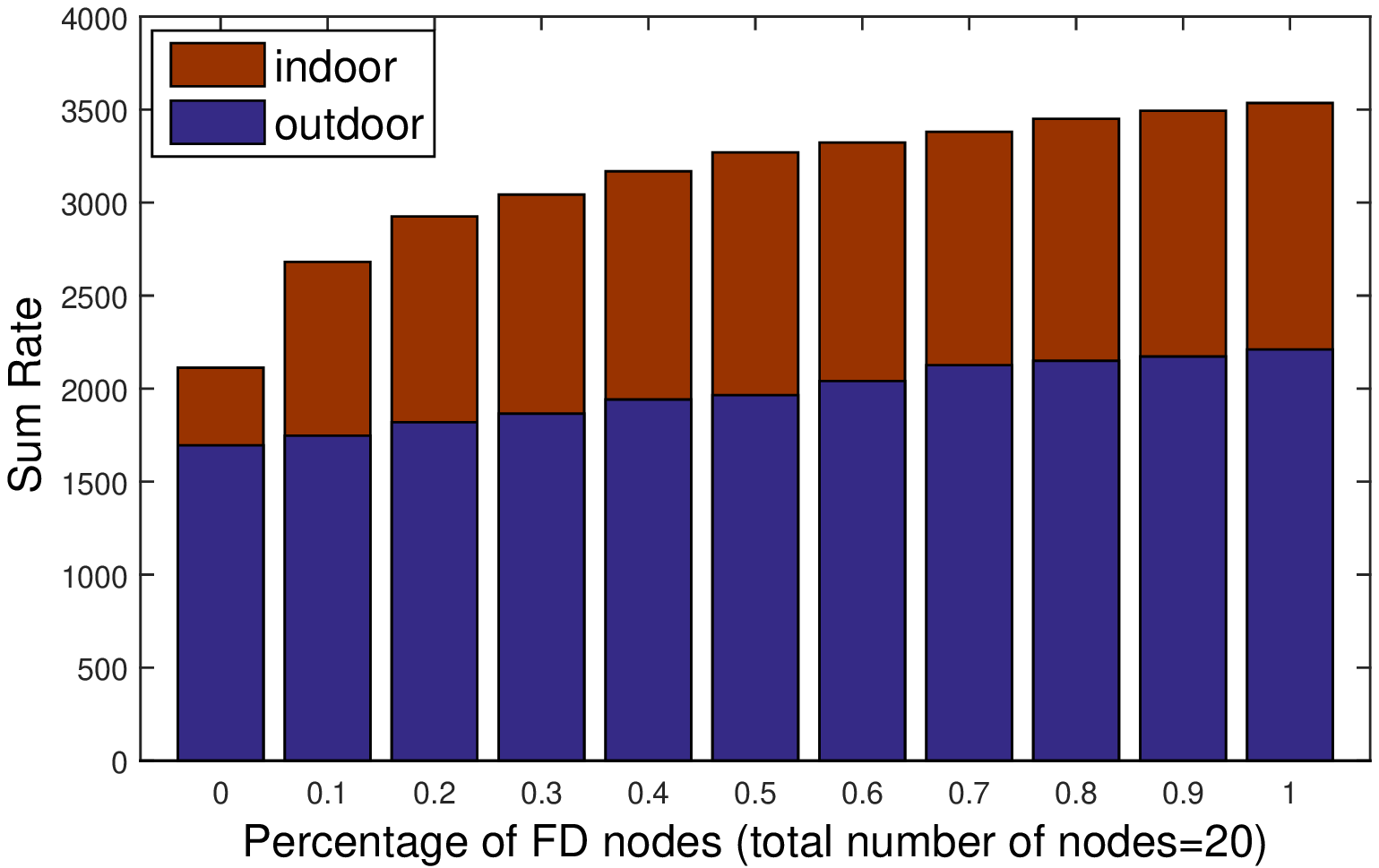}
  \caption{\small{Effect of the fraction of FD nodes on the network throughput in both indoor and outdoor scenarios}}
  \label{fig:FDpercentage}
\end{minipage}%
\vspace{-8mm}
\end{figure}

Fig. \ref{fig:CDF_compare} shows the CDF of the threshold $\beta$ for both indoor and outdoor environments based on the analysis in Proposition 2. As evident, in the indoor and outdoor scenarios, the CDF curve almost reaches one for  a threshold $\beta$ close to  $-90$ dB and $-120$ dB, respectively. These results match those in Fig. \ref{fig:beta_threshold_indoor} and Fig. \ref{fig:beta_threshold_outdoor} obtained through simulations. Therefore, the presented analysis is able to accurately predict the required self-interference cancellation performance.

Fig. \ref{fig:FDpercentage} shows the performance of a full-duplex OFDMA network with a mix of FD and HD users. A total of 20 users are considered, assuming perfect SI cancellation  for FD devices. It can be seen that increasing the percentage of FD users in an outdoor environment does not increase the total sum-rate significantly, but in the indoor case by equipping only $10\%$ of the nodes with FD technology the network throughput greatly increases. The reason behind this is the large inter-node interference in the indoor environment that could be avoided by using FD users instead of HD ones.

\begin{figure}[t!]
\begin{minipage}{\linewidth}
\includegraphics[width=\linewidth]{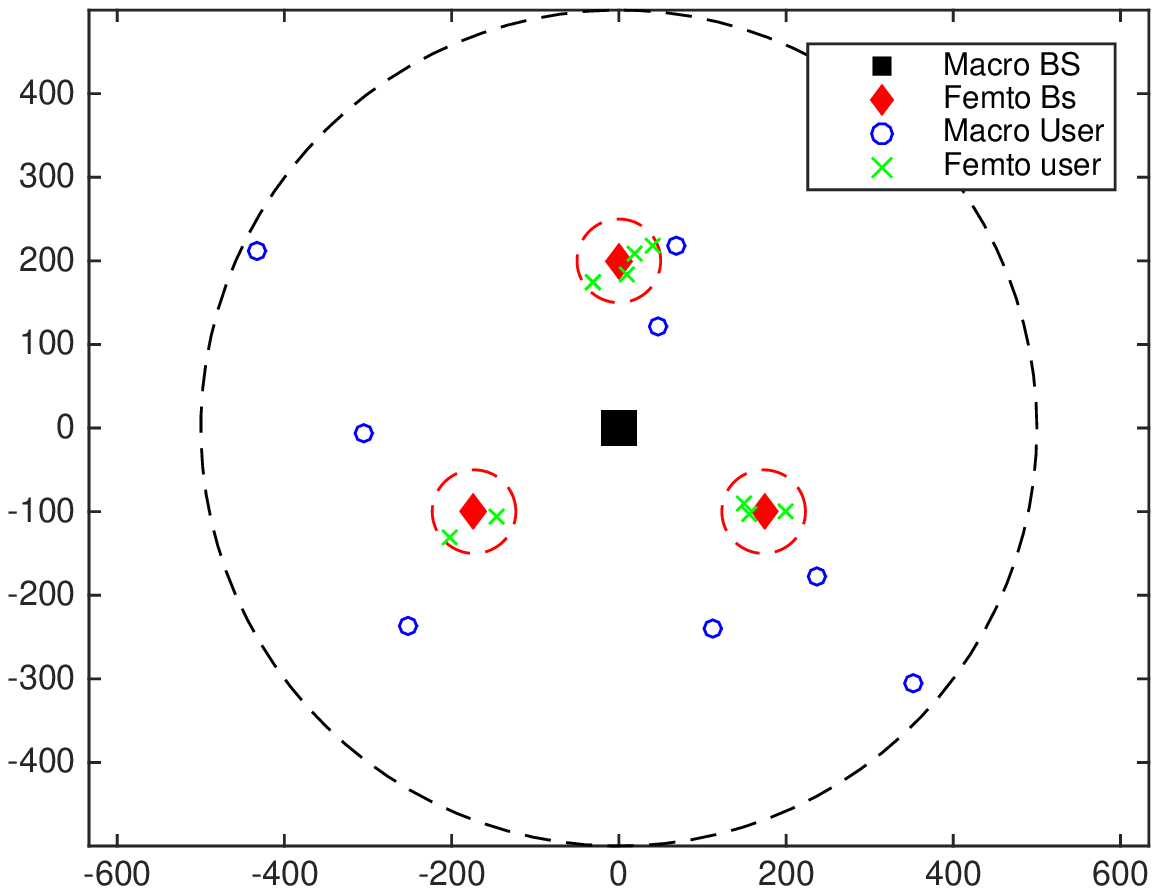}
  \caption{\small{Network topology}}
        \label{fig:topology}
\end{minipage}
\hfill
\begin{minipage}{\linewidth}
\includegraphics[width=\linewidth]{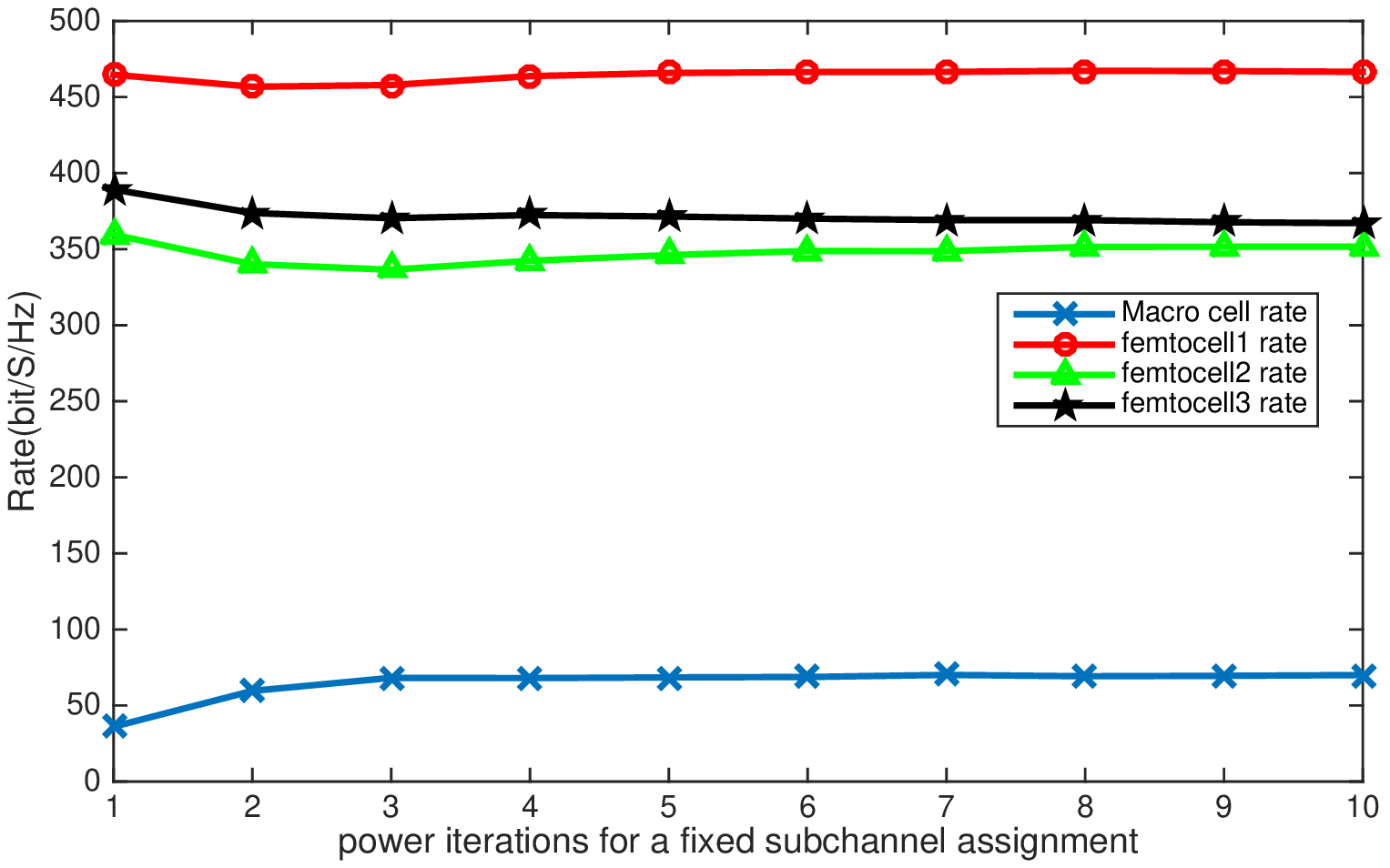}
  \caption{\small{Convergence of the power allocation algorithm}}
        \label{fig:power_iteration_convergence}
\end{minipage}%
\vspace{-8mm}
\end{figure}

For the heterogeneous OFDMA network, we consider a macro BS with 8 users and 3 femto cell BSs with 2, 3, and 4 users, respectively. The location of base stations and users are depicted in Fig. \ref{fig:topology}. We assumed that macro users and femto users are randomly spread within a cell radius of 500 m and 50 m around their related BSs, respectively.

Fig. \ref{fig:power_iteration_convergence} shows the convergence of the power allocation algorithm of Section \ref{power_allocation} for a fixed sub-channel assignment in the said heterogeneous network. Here, the minimum downlink and uplink rates for macro cell users are set to $R_{mind}=35$ bit/s/Hz and $R_{minu}=35$ bit/s/Hz, respectively. The total rate of macro users is then to be larger than $70$ bit/s/Hz. As we see in this figure, this constraint is satisfied after a few iterations. Also it was expected that this inequality constraint should be satisfied with equality, since if the total rate of macro users becomes bigger than the constraint it increases extra interference for femto cells and reduce the objective function.

\begin{figure}[t!]
\begin{minipage}{\linewidth}
  \includegraphics[width=\linewidth]{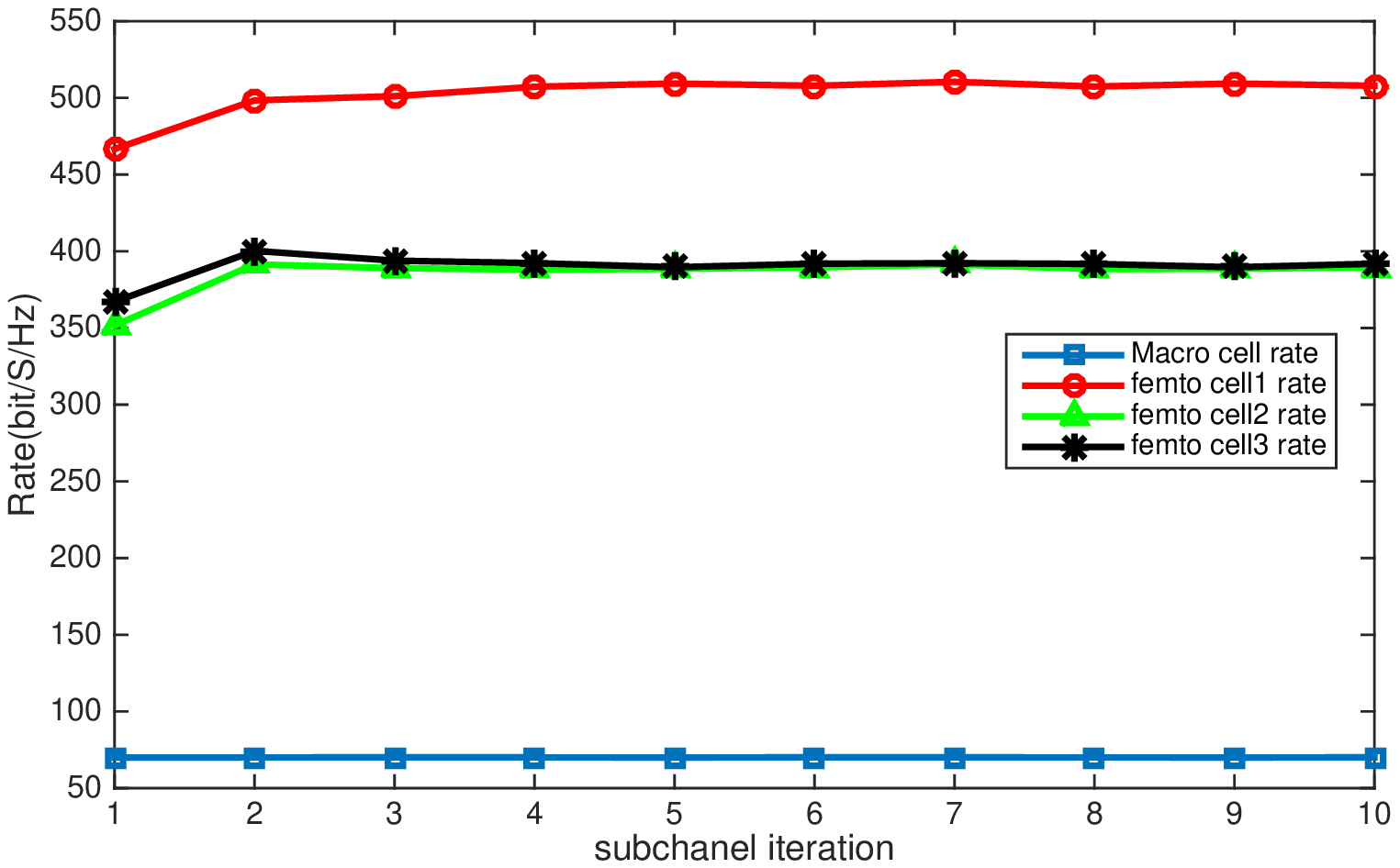}
  \caption{\small{Convergence of the iterative subchannel allocation algorithm}}
        \label{fig:subchannel_iteration_convergence}
\end{minipage}
\hfill
\begin{minipage}{\linewidth}
  \includegraphics[width=\linewidth]{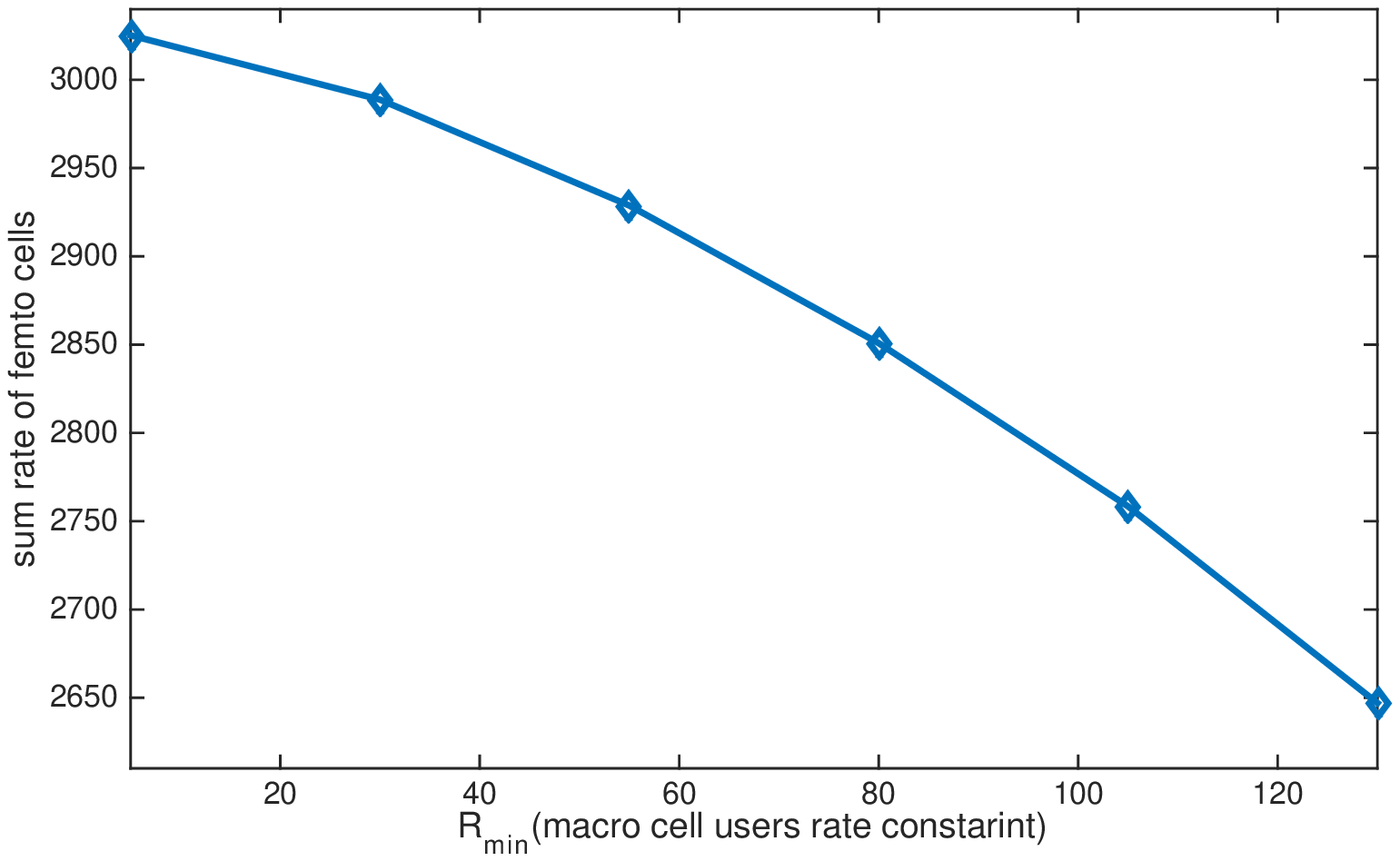}
  \caption{\small{Tradeoff between femto cell throughput and macrocell $R_{min}$}}
        \label{fig:tradof}
\end{minipage}%
\vspace{-8mm}
\end{figure}

Fig. \ref{fig:subchannel_iteration_convergence} shows the convergence of the iterative sub-channel allocation algorithm. It can be seen that the femto cell rates converge after a few iterations, while the minimum rate for macro cell users is satisfied.

Fig. \ref{fig:tradof} depicts the sum-rate of femto cells as a function of the minimum required rate for the macro cell. As evident, the larger the macro cell rate, the smaller the femto cells sum-rate. The reason is that by increasing the rate of the macro cell, the interference caused by macro cell to the femto cell would also increase, thereby reducing the femto cells sum-rate.

\begin{figure}
\centering
  \includegraphics[width=\linewidth]{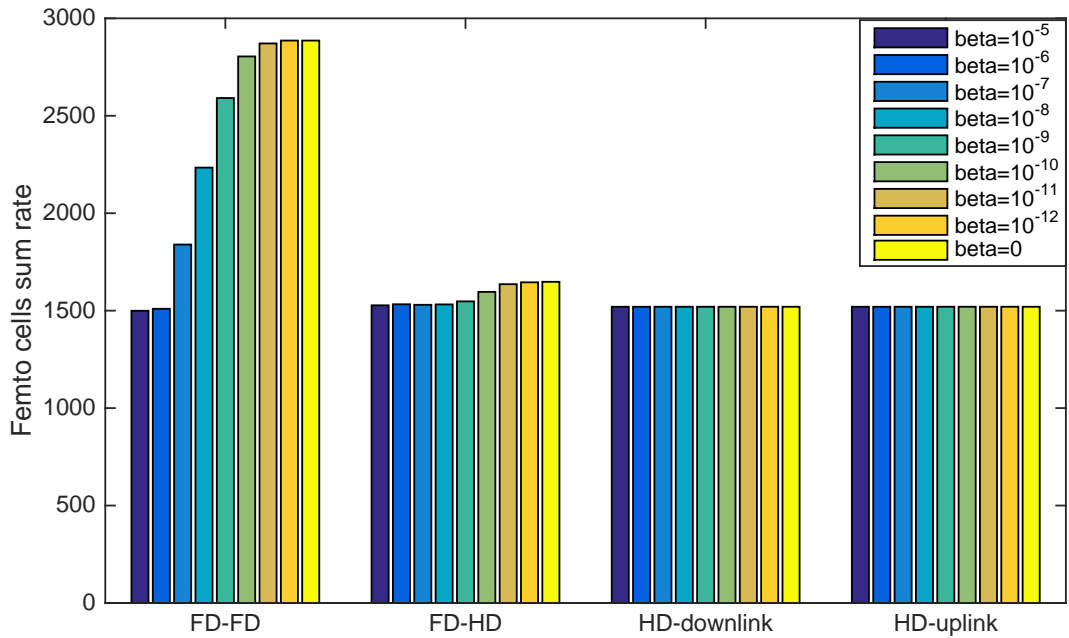}
  \caption{\small{femto cells sum rate for four different schemes and different $\beta$s}}
        \label{fig:compareness}
\end{figure}

Fig. \ref{fig:compareness} shows the femto cells sum-rate for different schemes (FD-FD, FD-HD, HD-downlink, HD-uplink) and different self interference cancellation coefficient values. The $\beta$ threshold effect is also obvious in this graph and its value is different from the previous sections because of the adopted path loss model and different distances in the heterogeneous setting. In this case, one cannot double the capacity by FD transmission because of the interference generated by other femto cells. Still, significant gains maybe achieved if users adopt the FD technology.
\pagebreak
  
\section{conclusion}
\label{conclusion}
In order to fully exploit the advantages of FD technology in wireless networks, it is important to design  appropriate resource allocation algorithms that consider the FD capability of the nodes and the BSs. In this paper, first we considered a single cell OFDMA network that contains an FD BS and a mixture of HD and FD users, and also assumed that FD nodes are not  necessarily perfect FD devices and may suffer from residual self-interference. For this model, we proposed a sub-channel allocation algorithm and power allocation method and showed that when all users and the BS have perfect FD transceivers, we can double the capacity. Otherwise, because of inter-node interference and self-interference the spectral efficiency gain is smaller, but we showed that even by using an imperfect FD BS in a network, the network throughput could  increase significantly. Then, we used the extended version of the proposed algorithms to solve an optimization problem for an FD OFDMA heterogeneous  network in which inter-cell interference should be taken into account. We also investigated FD operation in both outdoor and indoor scenarios and studied the effect of the self interference cancellation coefficient and of the percentage of FD users. Finally, we analyzed the effect of the SI cancellation level on the network performance and numerically computed the CDF of the SI cancellation coefficient threshold which had been observed in the simulation results.

\bibliographystyle{IEEEtran}
{\small
\bibliography{bibfiles}}
\end{document}